\newcommand{\ket}[1]{\left| #1\right\rangle}        
\newcommand{\bra}[1]{\left\langle #1\right|}        
\newcommand{\braket}[2]{\left\langle #1 | #2 \right\rangle} 
\newcommand{\ketbra}[2]{\ket{#1}\!\bra{#2}}
\newcommand{\mvalue}[3]{\left\langle #1 \left| #2 \right| #3 \right\rangle}
\newcommand{\ii}{\mathbb{I}}     
\newcommand{\norm}[1]{\left\| #1\right\|}        
\newtheorem{definition}{Definition}
\newcommand{\alphavar}{\mu}
\newcommand{\Lambdavar}{\Lambda}
\newcommand{\sparse}[1]{$#1$--sparse}
\newcommand{\initt}{t_0}
\newcommand{\dt}{\Delta t}
\newcommand{\numBitsInT}{N_T}
\newcommand{\numQubitsInH}{n_H}
\newcommand{\Noracle}{N_{\rm queries}}
\newcommand{\Costvar}{C}
\newtheorem{theorem}{Theorem}
\newtheorem{corollary}[theorem]{Corollary}
\newcommand{\eq}[1]{\hyperref[eq:#1]{(\ref*{eq:#1})}}
\renewcommand{\sec}[1]{\hyperref[sec:#1]{Section~\ref*{sec:#1}}}
\newcommand{\app}[1]{\hyperref[app:#1]{Appendix~\ref*{app:#1}}}
\newcommand{\tab}[1]{\hyperref[tab:#1]{Table~\ref*{tab:#1}}}
\newcommand{\fig}[1]{\hyperref[fig:#1]{Figure~\ref*{fig:#1}}}
\newcommand{\thm}[1]{\hyperref[thm:#1]{Theorem~\ref*{thm:#1}}}
\newcommand{\lem}[1]{\hyperref[lem:#1]{Lemma~\ref*{lem:#1}}}
\newcommand{\cor}[1]{\hyperref[cor:#1]{Corollary~\ref*{cor:#1}}}
\newcommand{\defn}[1]{\hyperref[def:#1]{Definition~\ref*{def:#1}}}
\newcommand{\alg}[1]{\hyperref[alg:#1]{Algorithm~\ref*{alg:#1}}}
\newcommand{\defin}[1]{\hyperref[def:#1]{Definition~\ref*{def:#1}}}
\begin{document}

\title{On The Power Of Coherently Controlled Quantum Adiabatic Evolutions}
\author{M\'{a}ria Kieferov\'{a}}
\affiliation{Research Center for Quantum Information, Institute of Physics,
Slovak Academy of Sciences, Bratislava,
Slovakia}
\affiliation{Department of Theoretical Physics, Comenius University, Bratislava,
Slovakia}
\author{Nathan Wiebe}
\affiliation{Quantum Architectures and Computation Group, Microsoft Research,
Redmond, WA 98052, USA}
\affiliation{Institute for Quantum Computing and Department of Combinatorics and Optimization, University of Waterloo,
200 University Ave., West, Waterloo, Ontario, Canada}
\begin{abstract}
A major challenge facing adiabatic quantum computing is that
algorithm design and error correction can be difficult for adiabatic
quantum computing. Recent work has considered addressing this challenge
by using coherently controlled adiabatic evolutions in the place of classically
controlled evolution.  An important question remains: what is the relative power
of controlled adiabatic evolution to traditional adiabatic evolutions?
We address this by showing that coherent control and measurement provides a way to
average different adiabatic evolutions in
ways that cause their diabatic errors to cancel, allowing for adiabatic
evolutions to combine the best characteristics
of existing adiabatic optimizations strategies that are mutually
exclusive in conventional adiabatic QIP.  This result shows that coherent control and
measurement can provide advantages for adiabatic state preparation.  
We also provide upper bounds on the complexity of simulating
such evolutions on a circuit based quantum computer and provide sufficiency
conditions for the equivalence of  controlled adiabatic evolutions to adiabatic
quantum computing.
\end{abstract}
\date{\today}
\maketitle

The quantum adiabatic theorem is an essential tool for quantum information
processing and quantum control~\cite{oreg1984adiabatic,kuklinski1989adiabatic,FG00,biamonte2011adiabatic,FW13,MG14}. 
It states that the evolution generated by a slowly varying Hamiltonian (relative
to the minimum eigenvalue gap) maps eigenstates
of the
initial Hamiltonian to eigenstates of the final Hamiltonian~\cite{kato50}.  This process
provides a simple and error robust method for state preparation that
is used extensively in quantum simulation, adiabatic quantum computing as well
as pulse design.  A drawback of adiabatic
evolution is that it is often much slower than competing state preparation
methods. Finding ways of reducing
``diabatic'' errors (which result from using a finite evolution time) is vitally important for practical applications
of adiabatic state preparation.

Two major strategies have been proposed to minimize the error in adiabatic
evolutions: local adiabatic evolution and
boundary cancellation methods.
Local adiabatic evolution~\cite{RC02,VDM+01} (LAE) minimizes the time to reach the
adiabatic regime by choosing the evolution speed such that the adiabatic condition
is satisfied at each instant throughout the evolution. In a typical scenario of LAE, the rate at which the
Hamiltonian changes is fast in the
beginning and the end of evolution, when the distance between the ground state
and the first excited state is large, and
small in the middle around the minimal gap. This approach optimizes the scaling
of the evolution time with the size of
the system and works best to reduce diabatic errors in the short time or
``Landau--Zener'' regime (so called because the
Landau--Zener formula provides a better approximation to the resultant state
than adiabatic perturbation theory does).

Boundary cancellation methods minimize the error
in the adiabatic approximation once the adiabatic
condition is met~\cite{LRH09,WB12}. These methods polynomially improve the
error scaling, relative to LAE, by setting the first
$n-1$ derivatives of the Hamiltonian to zero at the boundaries. This strategy tends to lead to
taking the Hamiltonian to be slowly varying near the beginning and end of the evolution, which
typically is where the eigenvalue gap is largest.  Since the
Hamiltonian will often vary slowly when the gap is large, it forces the evolution to speed up around
the minimal gap, which retards the
convergence to the adiabatic regime (the regime where adiabatic perturbation theory
applies).

These two approaches are typically at odds: LAE says that you should move
quickly when the gap is large to
minimize the error, which is often at the beginning and end of the adiabatic
passage~\cite{VDM+01,RC02}; whereas boundary cancellation
methods show that it is often best to move very slowly at the beginning and end
of the evolution~\cite{LRH09,RPL10}.  The question is: can
these two objectives be simultaneously satisfied and if so how?

We consider a model of adiabatic quantum computation that can achieve both
goals.  Our hybrid  model for adiabatic
computation uses a small control register that the user has universal control over,
and a larger adiabatic system that is coherently controlled by the smaller register. These generalizations
grant us two freedoms: (a) the adiabatic subsystem can evolve under a
superposition of different adiabatic evolutions
and (b) measurement can be used on the control qubits without exciting the
system out of the groundstate. 
These freedoms allow us
to escape the
constraints of unitarity and implement a wider class of
operations including linear combinations of
unitaries~\cite{CW12}, which we use to increase the resilience of the evolution
to diabatic errors. This model also subsumes those of~\cite{Hen14,ZR99}.

Unlike the previous methods, we do not search for a single optimal adiabatic
evolution. Instead we take two (or more)
evolutions that generate errors that are oriented in opposite directions as
in~\fig{diagram} and then use the
non--deterministic circuit in \fig{circuit} to suppress these errors by
performing an appropriate weighted average
of the evolutions. 
We then show that a linear combination of adiabatic evolutions can
asymptotically decrease the error in the adiabatic
approximation. The resultant averaged adiabatic evolution can have the benefits
of both LAE and boundary cancellation methods:
the convergence to the
adiabatic regime is comparable to LAE, while the error scaling in the adiabatic
regime is comparable to that of boundary
cancellation methods.

In the following section we review the adiabatic theorem.
We then provide the gadget that we use to cancel the leading order
diabatic errors in~\sec{gadget}.  We illustrate the utility of this method in~\sec{single}
where we apply the gadget to approximately cancel the dominant diabatic transition.
We provide methods in~\sec{general} that 
simultaneously suppress every transition, assuming that the adiabatic paths obey a
particular symmetry condition.  Finally, we discuss how our techniques can combine
the best features of local--adiabatic evolution and boundary cancellation methods in~\sec{LA} and
then discuss the implementation of our model of coherently controlled adiabatic
evolution using a quantum computer in~\sec{qcimp}.

\section{Review of the adiabatic theorem}
It is not possible to provide a closed form solution to the Schr\" odinger equation for the case of time--dependent Hamiltonians in general.
It is customary in such cases to express the time evolution operator, which is the formal solution to
\begin{equation}
\frac{\partial U(t,0)}{\partial t}  = -i H(t) U(t,0),
\end{equation}
as 
\begin{equation}
U(t,0) = \mathcal{T} e^{-i \int_0^t H(\tau) \mathrm{d} \tau}:= \lim_{r\rightarrow \infty} \prod_{j=0}^{r-1} e^{-i H(jt/r)t/r}.
\end{equation}
A wide array of approximation methods exist for $U(t)$ including the Magnus expansion~\cite{Mag54}, Dyson series~\cite{Joa75}, Floquet
theory~\cite{LBM+95}, the Landau--Zener formula~\cite{Joy94} and the adiabatic approximation.

The adiabatic approximation is widely used to approximate quantum dynamics in cases where rate of change of the time--dependent Hamiltonian is slow
relative to an appropriate power of the minimum eigenvalue gap.  In essence, the approximation states that if you prepare a system in an eigenstate of
the Hamiltonian and evolve sufficiently slowly then the quantum system will remain in the eigenstate throughout the evolution.  This lack of
excitation throughout the process makes it analogous to reversible adiabatic processes in statistical mechanics.  This analogy is not exact since the
change in Von--Neumann entropy is also zero for any unitary process and so the ``adiabatic" moniker persists for largely historical reasons.

Since the adiabatic approximation requires slow evolution, it is useful to consider how the approximation error scales as the speed of the transition
from the initial to the final Hamiltonian decreases.  This makes it natural to parameterize time via the variable $s$ where
\begin{equation}
s=t/T,
\end{equation}
and $T$ is the total time for the adiabatic passage. While an adiabatic evolution occurs on $t\in [0,T]$, $s\in [0,1]$ regardless
of the actual duration of the evolution.  This means that if the Hamiltonian is re--parameterized as $H(s)$, then we can increase $T$ to make the
evolution slower without fundamentally changing the form of the evolution.

We need to introduce some further notation before we can discuss the adiabatic approximation in greater detail.  Firstly we define $\ket{n(s)}$ to be
the instantaneous eigenvectors of the time--dependent Hamiltonian:
\begin{equation}
H(s)\ket{n(s)}=E_n(s) \ket{n(s)},
\end{equation}
and we make no assumptions about the ordering of $E_n$ (i.e. we do not assume that $E_0 \le E_1$).
Also for notational simplicity, we define $\ket{g(s)}:=\ket{E_0(s)}$.  We refer to this state as $\ket{g(s)}$ because it will represent the ground
state in many practical examples of adiabatic QIP.  The eigenvalue gaps will also be key to our analysis and so we use the following notation for
them:
\begin{equation}
\gamma_{\mu,\nu}(s) := E_{\mu}(s)-E_{\nu}(s).
\end{equation}

The adiabatic approximation is often expressed in many different ways.  The simplest of these states that
\begin{equation}
U(1,0)\ket{g(0)}\approx e^{-i\int_0^1  E_0\left( \xi\right) T d
\xi}\ket{g(1)}+O\left(\frac{1}{T}\right)
\label{zero_ord}
\end{equation}
In general the adiabatic approximation holds if
\begin{equation}
T \gg \frac{\max_{s} (\|\partial_s H(s)\|^a+\|\partial_s^2 H(s)\|^b+\|\partial_s^3 H(s)\|^c)}{\min_s\gamma_{g,1}(s)^d},\label{eq:adiabaticcond}
\end{equation}
for integers $a,b,c$ and $d$ that depend on the properties of the Hamiltonian~\cite{adiab_error,JSR,teufel2003adiabatic,SL05}.
A common misconception is that the adiabatic approximation holds if
\begin{equation}
T \gg  \frac{\max_s\|\partial_s H(s)\|}{\min_s\gamma_{0,1}(s)^2},
\end{equation}
although this criteria is appropriate for sufficiently slow evolutions under smoothly varying Hamiltonians~\cite{adiab_error,JSR}.

We refer to such results as zeroth order adiabatic theorems, because they provide an estimate of the error that is correct to zeroth order in
$T^{-1}$, meaning that
they simply tell you that the error is zero if the adiabatic process is infinitely slow.  In order to show that we can combine different adiabatic
evolutions together to cancel the error, we need to have a sharper adiabatic condition that approximates the error to at least $O(1/T)$.
It is necessary for us to use a first order adiabatic approximation, which provides us with the error in the adiabatic approximation correct to
$O(1/T^2)$~\cite{adiab_error}:
\begin{equation}
\left.
\left(\ii-\ket{g\left(1\right)}\!\bra{g\left(1\right)}\right)U\left(1,
0\right)\ket{g\left( s \right) }=
\sum_{n\neq g}e^{-i\int_0^1 E_n\left(\xi\right) T d
\xi}\frac{\braket{\dot{n}(s)}{g(s)}e^{i\int_0^s\gamma_{g,n}(\xi)d\xi
T}}{-i\gamma_{g,n}\left(s\right)T}\right|_{s=0}^1
\ket{n\left(1\right)} +O\left(\frac{1}{T^2}\right) \label{eq:error}.
\end{equation}
This result can easily be found by using path integral methods using techniques also discussed in~\cite{MMP06,ghosh2013high,farhi1992functional} and
upper bounds on the magnitude of the sum of all $O(1/T^2)$ terms are given in~\cite{adiab_error}.

Equation~\eq{error} tells us something surprising: the leading order contribution to the error in the adiabatic approximation does not depend on the
minimum gap but rather the eigenvalue gap at the beginning and the end of the evolution, which motivates taking $\dot H(s)=0$ or equivalently
$\braket{\dot{n}(s)}{g(s)}=0$ on the boundary as per boundary cancellation methods~\cite{LRH09}.  The apparent contradiction posed by~\eq{error} is easily
resolved.  Adiabatic
conditions like~\eq{adiabaticcond} give criteria for the convergence of the adiabatic perturbation series of $U(T,0)$ in powers of $1/T$ and equations
such as~\eq{error} give a truncated expression for the power series.  This means that after a critical evolution speed, the error in the adiabatic
approximation no longer depends on the minimum gap; whereas, the error depends crucially on the minimum gap before this point.  We refer to the regime
where
the minimum gap dictates the error as the Landau--Zener regime and the regime where it does not as the adiabatic regime.

Similarly, the first order adiabatic theorem relies on several conditions outlined in~\cite{adiab_error}. First, the Hamiltonian must be
twice differentiable and 3-times piecewise
differentiable with all such derivatives upper bounded by a constant. Second, the system must be already in the adiabatic regime (i.e. the
$\Theta(1/T)$ contribution to the error is much greater than the sum of all $O(1/T^2)$ contributions). Third, we require that the norm of the
Hamiltonian
be upper-bounded by a constant for all times during the evolution.  These criteria guarantee the validity of \eq{error}.

A common way to reduce errors in both the Landau--Zener regime, as well as the adiabatic regime, is to change the path used in the adiabatic
evolution.
The most frequently used adiabatic path, known as linear interpolation, is
\begin{equation}
H(s)= (1-s) H_0 + sH_1,\label{eq:linear}
\end{equation}
where $H_0$ is the initial Hamiltonian and $H_1$ is the final Hamiltonian.  There are of course many ways that one could imagine transitioning from
the initial Hamiltonian to the final Hamiltonian.  Each of these ways represents a particular ``adiabatic path'' and~\eq{linear} is known as the
linear adiabatic path.  More generally we could consider a path of the form
\begin{equation}
H(s)= (1-f(s)) H_0 + g(s)H_1,\label{eq:nonlinear}
\end{equation}
where $f(0)=g(0)=0$ and $f(1)=g(1)=1$.  Such paths can be extremely important for adiabatic quantum computing because they allow the evolution
to slow down through, or even avoid,  parts of the evolution that contribute substantially to the error; however, here we assume the simple case of
$g(s)=f(s)$.  We do
 not require that the range of $f$ is $[0,1]$ here.  In fact, some of the adiabatic paths that we consider will attain negative values
and values greater than $1$.

Other examples of non--linear paths include local--adiabatic evolution, which seeks to minimize the error in the Landau--Zener regime by choosing the
evolution speed to be smallest near the minimum gap.  Boundary cancellation methods on the other hand choose paths that minimize the error in the
adiabatic regime by choosing the evolution speed to be zero at $s=0$ and $s=1$.  These two strategies are seemingly orthogonal.  At present there is
no known method that combines the best features of local adiabatic paths and the paths yielded by boundary cancellation methods.  Our work provides a
way  to achieve this, thereby illustrating that controlled adiabatic evolution affords greater power than conventional adiabatic evolution.

\section{Controlled adiabatic evolution using a small number of ancillas\label{sec:gadget}}
The central idea behind our approach is to use a gadget that was recently proposed in~\cite{lin_comb} to non--deterministically implement
the weighted average of two or more adiabatic evolutions.  
This idea of using controlled adiabatic evolutions and measurement has been recently explored by Itay Hen~\cite{Hen14} and is also used in holonomic
quantum computing~\cite{ZR99}; however, these results do not consider using coherent control and measurement to suppress diabatic errors.
The gadget that we use for this averaging process is given in~\fig{circuit}.
The circuit in~\fig{circuit} probabilistically implements linear combinations of unitary operations, as seen through the following argument
\begin{align}
\ket{\psi}\ket{0} &\rightarrow \ket{\psi}\left(
\cos{\theta}\ket{0}+\sin{\theta}\ket{1} \right)\nonumber\\
&\rightarrow \cos{\theta}U_A
\ket{\psi}\ket{0}+\sin{\theta}U_B\ket{\psi}\ket{1}\nonumber\\
&\rightarrow \left(\cos^2{\theta}U_A+\sin^2{\theta}U_B\right)
\ket{\psi}\ket{0}+\sin{\theta}\cos{\theta}\left(U_B-U_A\right)\ket{\psi}\ket{1}
\label{lin_comb}.
\end{align}
We then see that if the ancilla register is measured to be $0$ then the circuit performs a weighted combination of $U_A$ and $U_B$ on the state
$\ket{\psi}$ otherwise the circuit implements the difference between the two operators.
\begin{equation}
p(0)\ge 1 - \norm{ \left( U_A-U_B\right) \ket{\psi} }^2.\label{eq:pfail}
\end{equation}
The generalization to cases where multiple $U_A$ and $U_B$ are used is trivial, it simply involves increasing the number of qubits used to control the
overall rotation~\cite{lin_comb}.
Such circuits, or variants thereof, are also used in \cite{floating_point,Paetznick2013}.

For the case of adiabatic evolution, we know that to zeroth order
\begin{align}
U_A(T_A,0)\ket{g(0)} = e^{-i\int_0^1 E_0(f_A(s)) \mathrm{d}s T_A}\ket{g(1)} + O(1/T),\nonumber\\
U_B(T_B,0)\ket{g(0)} = e^{-i\int_0^1 E_0(f_B(s)) \mathrm{d}s T_B}\ket{g(1)} + O(1/T),\label{eq:nophases}
\end{align}
where $T = \max\{T_A,T_B\}$.  This means that, to leading order, both $U_A$ and $U_B$ generate the same evolution up to a global phase and hence we
expect the success probability to be high if the
phases picked up by $\ket{g}$ under both evolutions are comparable.

Rather than choosing different paths that apply the same phase to $\ket{g(s)}$, we counter--rotate the evolution of each eigenstate by including an
additional phase to each unitary.  This affords us much greater freedom to choose adiabatic paths for $U_B$ and $U_A$.  In particular, we choose these
phases such that
\begin{align}
U_A(T_A,0)\ket{g(0)} = e^{i\int_0^1 E_0(f_A(s)) \mathrm{d}s T_A}\left(\mathcal{T}e^{-i\int_0^1 H(f_A(s)) \mathrm{d}s T_A}\ket{g(1)}\right),\nonumber\\
U_B(T_A,0)\ket{g(0)} = e^{i\int_0^1 E_0(f_B(s)) \mathrm{d}s T_B}\left(\mathcal{T}e^{-i\int_0^1 H(f_B(s)) \mathrm{d}s
T_A}\ket{g(1)}\right).\label{eq:phases}
\end{align}
We see from the choices of phases in~\eq{phases} that~\eq{pfail} gives the failure probability of the linear combination $O(1/T)$ in the
limit of large $T$.  This means that
the failure probability will typically be extremely small for adiabatic processes and even if a failure is observed, the gadget in~\fig{circuit}
informs the user that a failure has occurred and the
state preparation process can be repeated until success is obtained.  We see from numerical experiments that the failure probability of these circuits
has a near--negligible impact on the cost of coherently controlled adiabatic state preparation in the adiabatic regime.

\begin{figure}
\centering
\begin{minipage}{0.45\linewidth}
\centering
\vspace{0.65 in}
\includegraphics[width=15em]{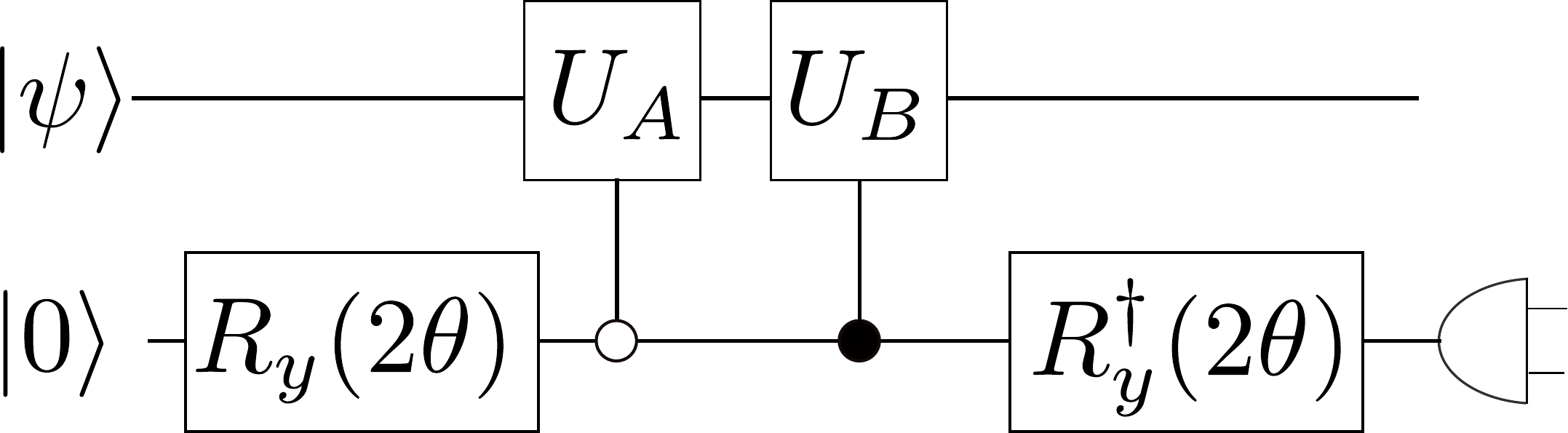}
\vspace{0.3 in}
\caption{Circuit for linear combination of two unitary operations.}
\label{fig:circuit}
\end{minipage}
\hspace{0.5 cm}
\begin{minipage}{0.45\linewidth}
\includegraphics[width=0.5\linewidth]{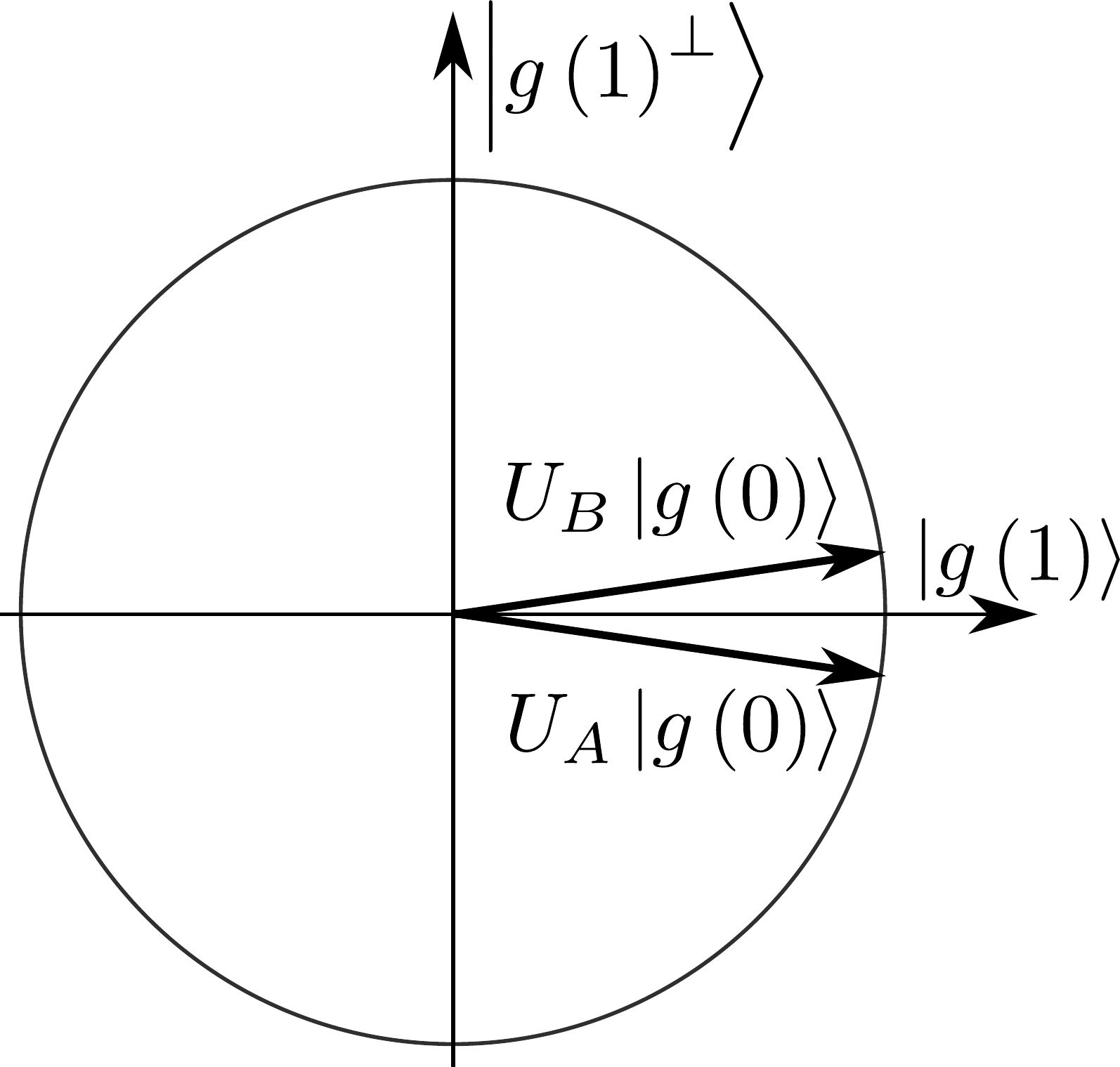}
\caption{The average of two evolutions with opposite errors will completely
eliminate the first order error.}\label{fig:diagram}
\end{minipage}
\hspace{0.5cm}
\end{figure}

Generalization of these ideas to cases where more than two unitary evolutions are averaged is straight forward and is discussed in detail
in~\cite{CW12}.
We present the two--unitary case explicitly here since the majority of our results focus on averaging two different adiabatic evolutions.

\section{A general method for canceling a single transition\label{sec:single}}
Our first approach is a generalization of the strategy employed by Wiebe and Babcock in~\cite{nathans}, which suppresses
the dominant transition in the adiabatic passage for adiabatic paths satisfying
\begin{equation}
\left. \frac{\braket{\dot{n}(s)}{g(s)}}{\gamma_{g,n}\left(s\right)}\right|_{s=0} = \left.
\frac{\braket{\dot{n}(s)}{g(s)}}{\gamma_{g,n}\left(s\right)}\right|_{s=1},\label{eq:WBsymmetry}
\end{equation}
by choosing the evolution time $T$ appropriately.
Our strategy is to suppress a single transition, not by choosing a single time and requiring a symmetry condition as per~\cite{nathans}, but by
interfering the adiabatic evolution with a dual evolution as suggested in~\fig{diagram}.  This allows such errors to be suppressed
for any evolution time and any primary path.  We also provide a method for suppressing the two most significant diabatic transitions
in~\app{threelevel}.

We wish to choose, for fixed $H_A$, an adiabatic path that quadratically suppresses the transition $\ket{g(0)} \rightarrow \ket{e(1)}$ where
$\ket{e(s)}$ is any given instantaneous eigenstate of $H(s)$.
From~\eq{error} and~\eq{phases}, we see that if we combine $U_A(T_A,0)$ with $U_B(T_B,0)$ and achieve a successful measurement outcome then we obtain
a result proportional to
\begin{equation}
(\cos(\theta)^2U_A(T_A,0) +\sin(\theta)^2U_B(T_B,0))\ket{g(0)} = \ket{g(1)} + O(1/T):=\ket{\phi}.
\end{equation}
So to leading order, the linear combination will give the correct result.
Then using,~\eq{error} it is clear that
\begin{align}
\ketbra{e(1)}{e(1)}\ket{\phi}\propto&\cos^2{(\theta)}
\left[\frac{\braket{\dot{e}^A_1(1)}{g(1)}}{-i\gamma^A_{g,e}\left(1\right)T_A}
-e^{+i\int_0^1 \gamma^A_{g,e^A}\left(\xi\right)T_A d \xi}
\frac{\braket{\dot{e}^A(0)}{g(0)} 
}{-i\gamma^A_{g,e}\left(0\right)T_A}\right]\nonumber \\
&+\sin^2{(\theta)}
\left[\frac{\braket{\dot{e}^B(1)}{g(1)}}{-i\gamma^B_{g,e}\left(1\right)T_B} - 
e^{+i\int_0^1 \gamma^B_{g,e}\left(\xi\right)T_B d \xi}  
\frac{\braket{\dot{e}^B(0)}{g(0)}}{-i\gamma^B_{g,e}\left(0\right)T_B}\right]\nonumber\\
&+O(1/T^2).\label{eq:errorxpr}
\end{align}
This transition can therefore be canceled, to $O(1/T^2)$, by choosing $\theta, T_B$ and $f_B$ such that the weighted average of the diabatic
transitions to the state $\ket{e}$ is
zero:

\begin{align}
0=&\cos^2{(\theta)}
\left[\frac{\braket{\dot{e}^A_1(1)}{g(1)}}{-i\gamma^A_{g,e}\left(1\right)T_A}
-e^{+i\int_0^1 \gamma^A_{g,e^A}\left(\xi\right)T_A d \xi}
\frac{\braket{\dot{e}^A(0)}{g(0)} 
}{-i\gamma^A_{g,e}\left(0\right)T_A}\right]\nonumber \\
&+\sin^2{(\theta)}
\left[\frac{\braket{\dot{e}^B(1)}{g(1)}}{-i\gamma^B_{g,e}\left(1\right)T_B} - 
e^{+i\int_0^1 \gamma^B_{g,e}\left(\xi\right)T_B d \xi}  
\frac{\braket{\dot{e}^B(0)}{g(0)}}{-i\gamma^B_{g,e}\left(0\right)T_B}\right]
\label{eq:errors_2levels}
\end{align}
where
$\braket{\dot{e}(s)}{g(s)}=\frac{\mvalue{e(s)}{\dot{H}(s)}{g(s)}}{\gamma_{e,g}
(s)}$. 
Thus it is reasonable to expect that this condition can be met by choosing $\theta$ and $f_B$ properly.  The remaining question is: how can this be
done in practice?  We provide two strategies for finding $f_B$ for any fixed $f_A$ such that these errors cancel to leading order.

\subsection{Partially Anti--Symmetric Combination\label{sec:partial}}
Our first method chooses the paths $f_A$ and $f_B$ to satisfy an anti--symmetric condition on the derivatives at the beginning and end of the
evolution.  This 
approach is most useful in cases where it is desirable for $f_B$ to be as similar to $f_A$ as possible.  
When optimizing these paths it is important to note that although $f_A$ and $f_B$ are arbitrary interpolation functions that describe the adiabatic
paths, they are constrained to obey 
\begin{align}
&f_A(0)=f_B(0)=0,\\
& f_A(1)=f_B(1)=1.
\end{align}
 Furthermore, 
let us choose $f_B$ such that its derivatives are symmetric with $f_A$ at $s=0$ and anti--symmetric at $s=1$
\begin{align}
\left.\dot{f}_B\left(s\right)\right|_{s=0}&=\left.
\dot{f}_A\left(s\right)\right|_{s=0},\nonumber\\
\left.\dot{f}_B\left(s\right)\right|_{s=1}&=-\left.\dot{f}
_A\left(s\right)\right|_{s=1}.\label{eq:fbderivs}
\end{align}
 Then using~\eq{fbderivs},
\eq{errors_2levels} simplifies to
\begin{align}
\left( \frac{\cos^2{(\theta)}}{T_A} -
\frac{\sin^2{(\theta)}}{T_B}\right)\frac{\braket{\dot{e}(1)}{g(1)}}{\gamma_{g,e}
\left(1\right)} 
=\left( \frac{\cos^2{(\theta)}}{T_A}e^{+i\int_0^1
\gamma^A_{g,e}\left(\xi\right)T_A d \xi} + \frac{\sin^2{(\theta)}}{T_B}
e^{+i\int_0^1 \gamma^B_{g,e}\left(\xi\right)T_B d \xi} \right)
\frac{\braket{\dot{e}(0)}{g(0)}}{\gamma_{g,e}\left(0\right)}.\label{eq:23cond}
\end{align}
Equation~\eq{23cond} can be satisfied for any $f_A$ and $T_A$ by setting
\begin{align}
T_B&=\frac{\int_0^1 \gamma^A_{g,e}\left(\xi\right) d \xi T_A
+\left(2n+1\right)\pi}{\int_0^1 \gamma^B_{g,e}\left(\xi\right) d
\xi} \label{2level_time}\\
\theta&=\arctan\left({\sqrt{\frac{T_B}{T_A}}}\right).\label{eq:2level_weight}
\end{align}
This solution reduces to that of~\cite{nathans} in the limit as $T_A\rightarrow 0$; however, a non--trivial secondary path will always be needed if
the symmetry condition demanded by~\cite{nathans} is not held.

An important consequence of taking the derivatives to be negative at $s=1$ is that there exists $s'$ such that
$f(x)>1$ for all $x\in (s',1)$.  This is a consequence of the fact that $H(s)$ is twice differentiable and hence $f'_B$ is continuous; from which the
result directly
follows from the mean value theorem.  Thus $f_b$ does not monotonically approach $1$ as $s\rightarrow 1$, but rather it overshoots the value and then
reverses direction to end the evolution at $s=1$.  Such reversals of direction are analogous to the backwards time steps used in Trotter--Suzuki
methods and, although non--traditional, are not necessarily problematic for adiabatic evolution.

We see from this discussion that controlled adiabatic paths can be used to suppress diabatic errors in ways that are impossible using traditional
adiabatic optimization strategies.  In particular, for any optimization strategy, such as local adiabatic evolution, we can always find a second path
to add to the primary path to suppress a chosen transition to one order higher.  These ideas can also be generalized to suppress more than one
transition; however, finding a closed form solution is difficult in such cases.  We discuss generalizing this method to simultaneously suppress two
diabatic transitions in~\app{threelevel}.
A drawback of this approach is that it cannot be used
for arbitrary small times because
\eqref{2level_time} forces a
difference at least $\frac{\pi}{\int_0^1 \gamma_{g,e}\left(\xi\right)}$ between
evolution times.  We address this issue below by providing a method that does not require a shift in time, but requires a more substantial deformation
to the primary adiabatic path.

\subsection{Completely Anti--Symmetric Combination\label{sec:complete}}
An alternative approach is to set the derivatives for the second path to be completely
antisymmetric
\begin{align}
\left. \dot{f}_B\left(s\right) \right|_{s=0}=\left.
-\dot{f}_A\left(s\right)\right|_{s=0}\nonumber\\
\left.
\dot{f}_B\left(s\right)\right|_{s=1}=\left.-\dot{f}_A\left(s\right)\right|_{s=1}\label{eq:completeSymmetry}.
\end{align}

Pluging~\eq{completeSymmetry} into \eq{errors_2levels} we obtain

\begin{align}
\left( \frac{\cos^2{(\theta)}}{T_A} -
\frac{\sin^2{(\theta)}}{T_B}\right)\frac{\braket{\dot{e}(1)}{g(1)}}{\gamma^B_{g,
e }
\left(1\right)} 
=\left( \frac{\cos^2{(\theta)}}{T_A}e^{+i\int_0^1
\gamma^A_{g,e}\left(\xi\right)T_A d \xi} - \frac{\sin^2{(\theta)}}{T_B}
e^{+i\int_0^1 \gamma^B_{g,e}\left(\xi\right)T_B d \xi} \right)
\frac{\braket{\dot{e}(0)}{g(0)}}{\gamma^B_{g,e}\left(0\right)}\label{eq:2ears}.
\end{align}
The error is suppressed when \eq{2level_weight} holds and 
\begin{equation}
T_B=\frac{\int_0^1 \gamma^A_{g,e}\left(\xi\right)T_A d \xi
+ 2n\pi}{\int_0^1 \gamma^B_{g,e}\left(\xi\right) d \xi }.
\end{equation}
In other words, the gap integrals for both paths must be equivalent modulo $2\pi$. This
removes the difficulty with offsetting one of times.
However, in this case, the path $f_B$ both begins and ends the evolution by moving backwards. Alternatively, we can modify one boundary from each
path. This backwards motion at $s=0$ means that
there exists $\delta>0$ such that the range of $f_B(s)$ is within $[-\delta, 1+\delta]$.  Again, this use of backward evolution is atypical of 
conventional approaches to adiabatic evolution where the additional evolution time/speed required by backwards evolution would tend
to be detrimental.  In contrast, such backwards evolutions can lead to substantial reductions in the cost for coherently controlled adiabatic
evolution.

\subsection{Interpolation}\label{sec:gencond}
There are many ways that these requirements can be satisfied by a dual path to $f_A$.  The way that we satisfy these requirements is by adding a
smooth polynomial continuation of $f_A$ about $s=1$ that allows the derivative to loop around and attain the opposite value.  This interpolation must
have piecewise continuous third derivatives in order to guarantee that the $O(1/T^2)$ terms will remain sub--dominant in the limit of large $T$.  This
naturally leads to a quartic interpolation that takes the following form for a partially anti--symmetric combination
\begin{equation}
f_B(s)=\begin{cases} f_A(s) & s<1-\Delta \\
es^4+ds^3+cs^2+bs+a & s\ge 1-\Delta \end{cases},\label{eq:onedogeara}
\end{equation}
where $\Delta$ is a free parameter that controls when $f_B$ switches from the original 
adiabatic path $f_A$ to the polynomial interpolation.  The parameters are then set by 
requiring 
\begin{align}
f_B(1) &= f_A(1)=1\nonumber\\
f_B(1-\Delta) &= f_A(1-\Delta)\nonumber\\
\dot f_B(1) &= - \dot f_A(1)\nonumber\\
\dot f_B(1-\Delta) &= \dot f_A(1-\Delta)\nonumber\\
\ddot f_B(1-\Delta) &= \ddot f_A(1-\Delta)\label{eq:onedogearb}
\end{align}
We could also have equally well choose the backwards evolution to start at $s=0$ rather than $s=1$.  Although seemingly arbitrary, this choice can
have a substantial impact on the error depending on whether the gap is larger at $s=0$ and $s=1$.  We also make use of this fact later
in~\sec{general} where we exploit this fact to suppress every transition simultaneously for Hamiltonians that satisfy a certain symmetry property.

The case of fully anti--symmetric boundaries is similar except now two polynomial interpolations are needed:
\begin{equation}
f_B(s)=\begin{cases} f_A(s) & \Delta/2<s<1-\Delta/2 \\
es^4+ds^3+cs^2+bs+a & s\le \Delta/2\\
e's^4+d's^3+c's^2+b's+a' & s\ge 1-\Delta/2 \end{cases},\label{eq:twodogeara}
\end{equation}
where $\Delta$ is a free parameter that controls how rapidly $f_B$ switches from the original adiabatic path, $f_A$, to the polynomial interpolation. 
The parameters are then set by requiring 
\begin{align}
f_B(0) &= f_A(0)=0\nonumber\\
f_B(1) &= f_A(1)=1\nonumber\\
f_B(1-\Delta/2) &= f_A(1-\Delta/2)\nonumber\\
f_B(\Delta/2) &= f_A(\Delta/2)\nonumber\\
\dot f_B(1) &= - \dot f_A(1)\nonumber\\
\dot f_B(0) &= - \dot f_A(0)\nonumber\\
\dot f_B(1-\Delta/2) &= \dot f_A(1-\Delta/2)\nonumber\\
\dot f_B(\Delta/2) &= \dot f_A(\Delta/2)\nonumber\\
\ddot f_B(\Delta/2) &= \ddot f_A(\Delta/2)\nonumber\\
\ddot f_B(1-\Delta/2) &= \ddot f_A(1-\Delta/2).\label{eq:twodogearb}
\end{align}
In particular, the coefficients in~\eq{twodogeara} can then be found by substituting~\eq{twodogeara} into~\eq{twodogearb}.

It then follows that for any fixed path $f_A$, we can choose $f_B$ such that the dominant transition is suppressed to $O(1/T^2)$.  This opens the
possibility that our error suppression methods may allow adiabatic state preparation to be performed using less evolution time (or equivalently, fewer
gates on a quantum computer) than existing methods.  However, local adiabatic evolution is known to be optimal for performing adiabatic Grover's
search~\cite{VDM+01,RC02}, so we cannot expect that the algorithm will outperform all existing adiabatic algorithms in every time regime.  We will see
below, that although our method does not outperform local adiabatic evolution for short times it can come very close to matching its performance while
giving substantially reduced error for slow evolutions.

\section{Comparison to Local Adiabatic Evolution}\label{sec:LA}
We focus our numerical results on the case of adiabatic Grover's search.  The Hamiltonian for adiabatic Grover's search is
\begin{equation}
H(f(s)) = (1-f(s)) (\openone - \ketbra{+^{\otimes n}}{+^{\otimes n}}) + f(s)(\openone - \ketbra{0}{0}),
\end{equation}
and sufficiently slow evolution of this Hamiltonian causes the initial eigenstate $\ket{+^{\otimes n}}$ to transition to the marked state $\ket{0}$ as
per Grover's search.
Local adiabatic evolution is known to be optimal for adiabatic Grover's search~\cite{RC02,VDM+01} meaning that the quadratic speedup over classical
algorithms
is attained for the adiabatic path.

\begin{figure}[t]
\centering
\includegraphics[width=30em]{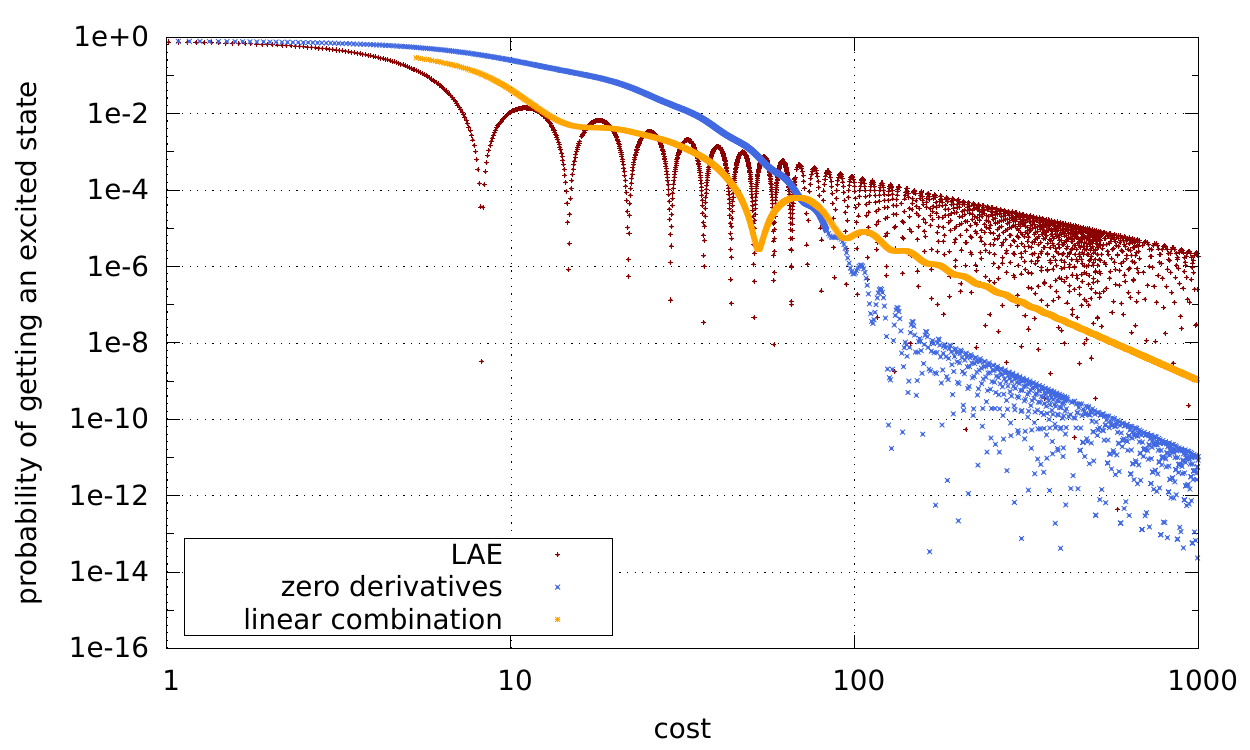}
\caption{Diabatic errors for
local adiabatic evolution, boundary cancellation with one zero--derivative and linear
combination of the local adiabatic evolution and an evolution with the opposite
derivative at the end for the Search Hamiltonian with $N=5$.}
\label{fig:partialsymmetric}
\end{figure}

The path for local adiabatic evolution, for cases where the search space is $N$--dimensional, is 
\begin{equation}
f(s)=\frac{\sqrt{N-1}-\tan\big[\arctan(\sqrt{N-1})(1-2s)\big]}{2\sqrt{N-1}}\label{eq:localevolution}.
\end{equation}
Our goal is to compare the cost of performing this adiabatic quantum algorithm using local adiabatic evolution to the cost incurred by using our
methods.  We choose $f_A$ to be the path given by local adiabatic evolution whereas $f_B$ is taken to be a continuation of the local adiabatic path
that satisfies~\eq{fbderivs} or~\eq{completeSymmetry} in all of the following numerical examples.

There are several ways that the cost of an adiabatic algorithm can be measured.  The most straight forward method is to compare the time required for
the evolution.  Although this cost metric is appropriate in cases where the norm of the Hamiltonian is fixed, it is not appropriate for comparing
different adiabatic evolutions because the energy required for both paths may differ substantially.  Since there is a duality between energy and time
in quantum mechanics, a fast evolution that requires a lot of energy may be dynamically equivalent to a slow evolution that requires little energy. 
Thus we need to consider not just the time but also the energy.  For this reason, we use the following cost metric for the case 
where we combine $j$ evolutions (where $j\ge 1$):
\begin{equation}
{\rm Cost} = \max_j \left\{\int_0^1 \|H_j(s)\| \mathrm{d} s T_j\right\}\biggr/ P(0),
\end{equation}
where $P(0)$ is the success probability of the gadget which is given by~\eq{pfail}.
Here we implicitly assume that the cost of the rotations and the control logic is negligible and that each of the evolutions can be implemented in
parallel.  These assumptions may not hold in general, but they are appropriate for quantum computer simulations of such adiabatic evolutions because
the query complexity of such evolutions depends on the maximum evolution time chosen rather than the total evolution time.  This point will be made
clear in~\sec{qcimp}.

We see in~\fig{partialsymmetric} that including the second adiabatic path with partially anti--symmetric boundary conditions (as per~\sec{partial}) to
LAE yields comparable performance to LAE for short evolutions and also provides the improved scaling of boundary cancellation methods in the adiabatic
regime.  In particular, the second path follows the interpolation strategy of~\eq{onedogeara}: it follows LAE (i.e.~\eq{localevolution}) until $s=0.8$
and then smoothly transitions to a fourth--order polynomial.  Unlike the method of~\cite{nathans}, this approach does not only give superior error
scaling over a discrete set of points; although, this technique does enforce a minimum evolution time as discussed in~\sec{partial}.  
An important drawback of this method is that there is a manifest lack of symmetry in the derivatives in the adiabatic regime for this method.  This
means that the adiabatic interference effects that appear in the LAE and boundary cancellation paths will not appear here~\cite{nathans}.  Note that
if the Search Hamiltonian did not have symmetric derivatives or spectrum, the adiabatic interference effects would not appear and so they are an
artifact of having a highly structured test Hamiltonian.

\fig{fullsymmetric} tells a similar story.  In that case we use fully anti--symmetric boundary conditions and add a second path that interpolates
between LAE and polynomial evolution as per~\eq{twodogeara} with $\Delta=0.2$.  This also corresponds to evolution under LAE for $80\%$ of the time. 
Unlike the case in~\fig{partialsymmetric} adiabatic interference patterns are again visible in the adiabatic regime because the two polynomials used
to create the fully anti--symmetric boundary conditions between the two paths at $s=0$ and $s=1$ ensure that the derivatives are the same at the
boundary; thereby allowing such interference effects to emerge again which causes the error to be substantially reduced on a discrete subset of points
as per~\cite{nathans}.  As a consequence we can clearly see that our method substantially outperforms both methods for a range of evolutions with cost
ranging from $[50,100]$ due in part to the presence of adiabatic interference effects that are absent from the boundary 
cancellation method.

In both of the cases considered, our methods are less effective at suppressing errors in the adiabatic regime than boundary cancellation methods. 
This is because the $O(1/T^2)$ terms in the error in the adiabatic approximation also depend on $\dot{H}(s)$.  Such terms are zero for boundary
cancellation methods and so we generically expect from the triangle inequality that boundary cancellation will lead to less error in this regime.  
An important point to note is that although these test cases do not outperform LAE for fast evolutions or boundary cancellation methods for slow
evolutions, they can outperform both methods for evolutions that operate at an intermediate speed.  This implies that these methods are not just a
compromise between the two approaches: they also provide superior scaling in a region that is badly addressed by existing adiabatic optimization
methods.

\begin{figure}[t]
\centering
\includegraphics[width=30em]{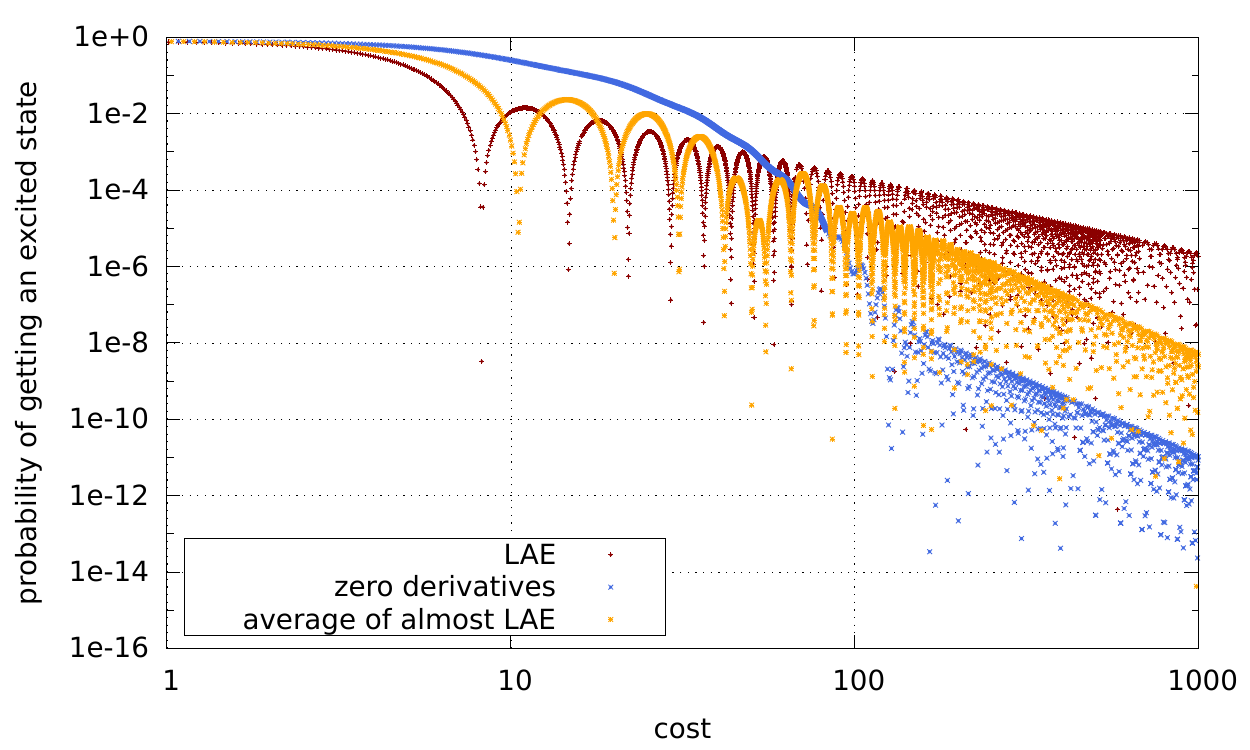}
\caption{Diabatic errors for
local adiabatic evolution, boundary cancellation with one--zero derivative and linear
combination of the two paths that resemble the local adiabatic evolution for most
of the time but have opposite derivatives the
beginning or the end
respectively for the Search Hamiltonian with $N=5$.}
\label{fig:fullsymmetric}
\end{figure}

\section{Suppressing every transition for symmetric $H$\label{sec:general}}

We now consider suppressing errors for Hamiltonians where $H(0)$ and $H(1)$ have the same spectra.
Although restrictive, this condition is satisfied in many natural problems~\cite{FG00,RC02,nathans}. 
This symmetry is very useful because it guarantees that two adiabatic interpolations exist between $H_0$ and $H_1$
such that the amplitudes for every state orthogonal to $\ket{g(1)}$ that arise due to $f_A$ are equal and opposite to those
that arise under $f_B$.  This means that the linear combination will simultaneously suppress diabatic leakage into every state.
 In contrast, the methods discussed in~\sec{partial} and~\sec{complete}
only guarantee this for a single (but arbitrarily chosen) transition.

First let us assume that the following conditions are met for $f_A(s)$ and $f_B(s)$
\begin{align}
\int_0^s \gamma_{g,n}^A\left(f_A\left(\xi\right)\right) \mathrm{d}\xi &= \int_0^s
\gamma_{g,n}^B\left(f_B\left(\xi\right)\right) \mathrm{d}\xi\label{eq:gaprequire}\\
\left.\frac{\braket{\dot{n}(f_A(s))}{m(f_A(s))}_A}{\gamma_{g,n}^A\left(f_A(s)\right)}\right|_{s=0}&=
-\left.\frac{\braket{\dot{n}(f_B(s))}{g(f_B(s))}_B}{\gamma_{g,n}^B\left(f_B(s)\right)}\right|_
{s=0}\label{eq:gencond1}\\
\left.\frac{\braket{\dot{n}(f_A(s))}{m(f_A(s))}_A}{\gamma_{g,n}^A\left(f_A(s)\right)}\right|_
{s=1}&=
-\left.\frac{\braket{\dot{n}(f_B(s))}{g(f_B(s))}_B}{\gamma_{g,n}^B\left(f_B(s)\right)}\right|_
{s=1}\label{eq:gencond2}
\end{align}
for all states $\ket{n}\ne \ket{g}$.  We will see that these conditions can always be met if the spectrum of $H(s)$ is symmetric about $s=1/2$.

Such conditions do not naturally arise for all adiabatic passages but there are many examples where such Hamiltonians are natural.  A natural example
is the
Search Hamiltonian; however, such an application is trivial because the quantum dynamics occurs within a two--dimensional subspace.  Other examples
occur in adiabatic
gates~\cite{Hen14,AMP+07,SBR+08} and holonomic quantum computing~\cite{ZR99,DCZ01}.

After substituting \eq{gaprequire},~\eq{gencond1} and~\eq{gencond2} into~\eq{errorxpr} we find

\begin{figure}
\centering
\includegraphics[width=27em]{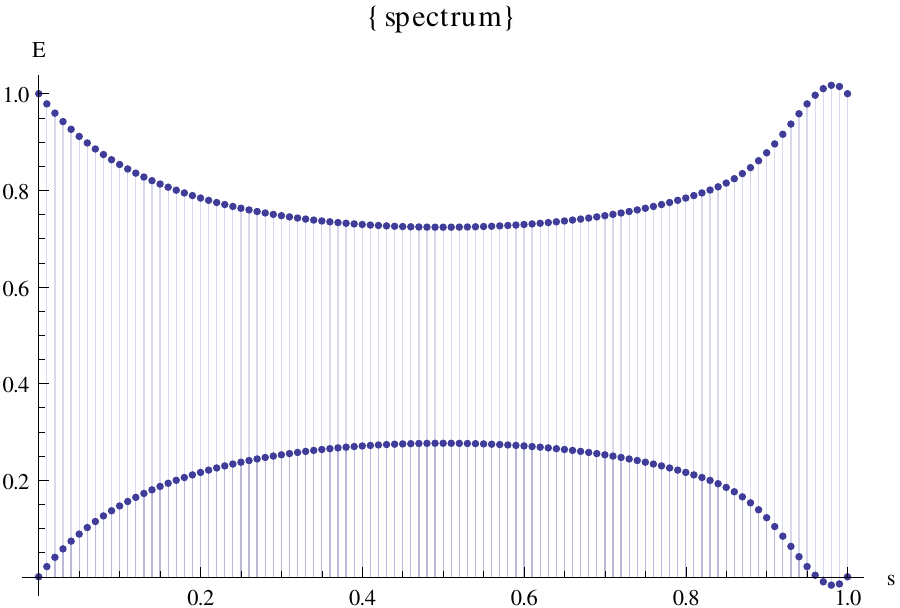}\
\includegraphics[width=27em]{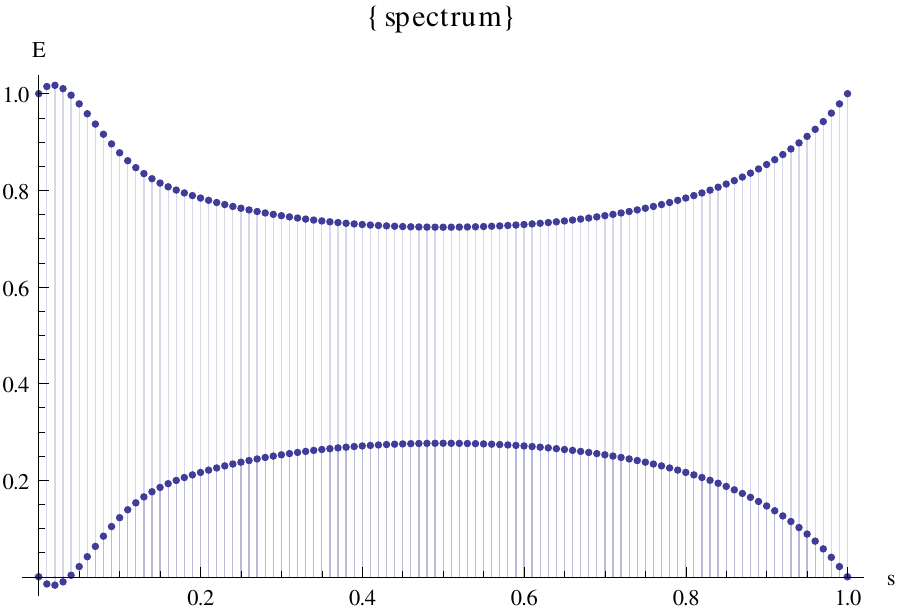} 
\caption{A symmetric pair of Hamiltonians with equal gap integrals and opposite derivatives at the 
begining and at the end. The same intuition formalized in \eq{gencond1} works for an 
arbitrary number of transitions.}
\label{fig:symmetric}
\end{figure}

\begin{align}
|\ketbra{n(1)}{n(1)}\ket{\phi}|=&\Biggr|\cos^2{(\theta)}
\left[\frac{\braket{\dot{n}^A_1(1)}{g(1)}}{-i\gamma^A_{g,n}\left(1\right)T_A}
-e^{+i\int_0^1 \gamma^A_{g,n^A}\left(\xi\right)T_A d \xi}
\frac{\braket{\dot{n}^A(0)}{g(0)} 
}{-i\gamma^A_{g,n}\left(0\right)T_A}\right]\nonumber \\
&-\sin^2{(\theta)}
\left[\frac{\braket{\dot{n}^A(1)}{g(1)}}{-i\gamma^A_{g,n}\left(1\right)T_A} - 
e^{+i\int_0^1 \gamma^A_{g,n}\left(\xi\right)T_A d \xi}  
\frac{\braket{\dot{n}^A(0)}{g(0)}}{-i\gamma^A_{g,n}\left(0\right)T_B}\right]\Biggr|+O(1/T^2).\label{eq:diabaticerror}
\end{align}
It is then clear that if we take $\theta=\pi/4$ and $T_A=T_B$ then every transition will be suppressed from $O(1/T)$ to $O(1/T^2)$ under these
assumptions.

The question remaining is: when can we make these conditions hold?  A natural case that covers a 
wide range of adiabatic protocols is the case where the eigenvalue gap is symmetric.  That is to say 
that $\gamma_{g,n}(s) = \gamma_{g,n}(1-s)$ for all $s$ and $n\ne 
g$.  It is difficult to find a second adiabatic path that satisfies the conditions 
in~\sec{gencond}, for the choice $f_A=f$ because the  anti--symmetry required by~\eq{gencond1}, 
\eq{gencond1} necessitates the use of adiabatic paths similar to those in~\sec{complete}.  Such 
paths will typically violate~\eq{gaprequire} because including the reversal near $s=0$ and $s=1$ 
will change the gap integral.  

A better approach is to modify \emph{both} paths.  It is easy to see by substitution that if we let $f_B$ be given by~\eq{onedogeara}
and~\eq{onedogearb} and then choose $f_A$ to be the time reversed version of this path (i.e. $f_A(s)=1-f_B(1-s)$).  It is then easy to see by substitution that under there assumptions
\begin{align}
H(f_A(s)) &=(1-f_A(s))H_0+f_A(s)H_1\nonumber\\ 
&=f_B(1-s)H_0+(1-f_B(1-s))H_1.
\end{align}
The assumption that $\gamma_{g,n}(s)=\gamma_{g,n}(1-s)$ then directly implies that $\gamma_{g,n}(f_A(s))= \gamma_{g,n}(f_B(1-s))$, which gives us the
desired result of
\begin{align}
\int_0^1 \gamma_{g,n}(f_B(s)) \mathrm{d} s=\int_{0}^{1} \gamma_{g,n}(f_A(s))\mathrm{d} s.\label{eq:gapintegrals}
\end{align}
This fact becomes immediately obvious in light of~\fig{symmetric}, where the spectrum for a Search Hamiltonian with $f_A$ and $f_B$ chosen to be time
reverses of each other is given.
After substituting~\eq{gapintegrals} into~\eq{diabaticerror} and using the assumption that $\gamma_{g,n}(1)=\gamma_{g,n}(0)$ we see that
\begin{equation}
\ketbra{n(1)}{n(1)}\ket{\phi}= O(1/T^2).
\end{equation}
Thus for any adiabatic path parameterized by $f(s)$ and any evolution time $T$, we can always choose two paths $f_A$ and $f_B$ that both incur
diabatic errors that cancel to $O(1/T^2)$.
This fact is demonstrated numerically for a Hamiltonian that satisfies these requirements in~\fig{more}.  In contrast, without these assumptions, only
the dominant transition is suppressed to $O(1/T^2)$ which implies that the diabatic errors scale as $O(1/T)$ for sufficiently long evolutions.
\begin{figure}
\centering
\includegraphics[width=27em]{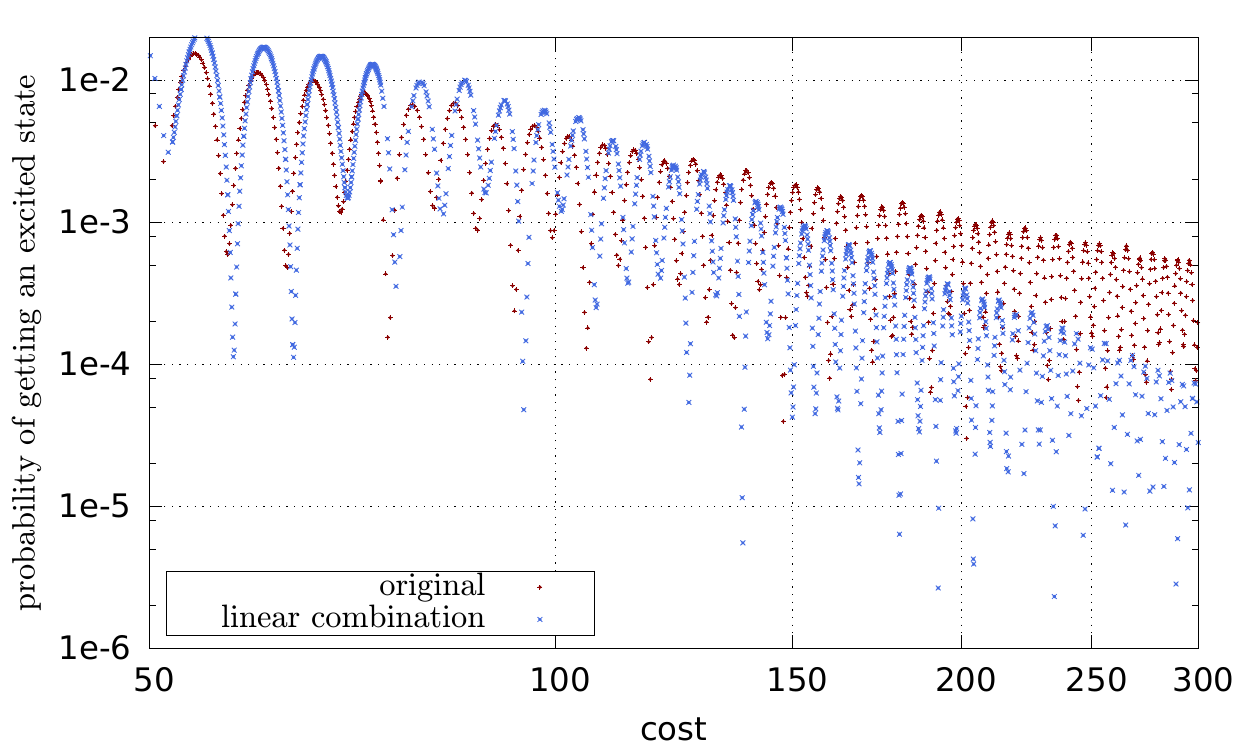}
\caption{Diabatic errors for $H(s) = \sigma_z^{(1)}+\sigma_z^{(2)}+\sin(\pi s) \mathcal{H}\otimes\mathcal{H}$ where $\mathcal{H}$  is the Hadamard
operator, for $H(s)$ directly and also for the case where the adiabatic paths are chosen as per~\eq{gencond1} and~\eq{gencond2} with $\Delta=0.1$ for
the linear combination.}
\label{fig:more}
\end{figure}

This result is much stronger than that of~\cite{nathans}, which only leads to suppression of~\emph{all} diabatic errors if the gap integrals for each
transition are rational multiples of each other and, even then, will only work at specially chosen values of $T$.  This precludes the technique's use
for almost all Hamiltonians.  In contrast, coherent control of the adiabatic path allows all of the transitions to be suppressed for a wide class of
adiabatic protocols and this result holds for any $T$.  This clearly shows demonstrates that coherently controlled adiabatic evolution allows us to
circumvent the limitations of existing adiabatic optimization schemes.  We will also see this below, where we show that these methods can be used in
concert with boundary cancellation methods.

\section{Incorporating Boundary Cancellation}\label{sec:imph}
The phase cancellation method allows one to reduce the error in adiabatic evolution to 
$O\left(\frac{1}{T^{m+1}}\right)$ by setting first $m$ derivatives of the Hamiltonian zero at the 
boundaries as shown in~\cite{LRH09,RPL10}. Our approach can incorporate these results in a natural way. 
When we set first $m$ derivatives to zero on the boundary, the remaining error is~\cite{nathans}
\begin{equation}
\left.
\left(\ii-\ket{g\left(1\right)}\!\bra{g\left(1\right)}\right)U\left(1,
0\right)\ket{g\left( s \right) }=
\sum_{n\neq g}e^{-i\int_0^1 E_n\left(\xi\right) T d
\xi}\frac{\bra{ n(s)}(\partial_s^{m+1}H(s))\ket{g(s)}e^{i\int_0^s\gamma_{g,n}\!(\xi)d\xi
T}}{-i\gamma_{g,n}^{m+2}(s)T^{m+1}}\right|_{s=0}^1
\ket{n\left(1\right)} +O\left(\frac{1}{T^{m+2}}\right) \label{eq:m_error}.
\end{equation}
This expression is analogous to \eq{error}, as can be seen by substituting
\begin{align}
\braket{\dot{n}(s)}{m(s)} &\rightarrow \frac{\bra{n(s)}(\partial_s^{m+1} H(s))\ket{m(s)}}{\gamma_{g,n}^{m+1}(s)}\nonumber,\\
T&\rightarrow T^{m+1}.\label{eq:subs}
\end{align}
into~\eq{error}.
Equation~\eq{m_error} implicitly assumes that $\braket{\dot n(s)}{n(s)}=0$, which can always be assumed to be true because the phases of $\ket{n(s)}$
are arbitrary.
The previous methods can then be used after making these substitutions and using a higher--order polynomial to perform the interpolation.  For
example, we can generalize the method of~\sec{general}
with the following modification to the conditions required for both $f_A$ and $f_B$: 
\begin{align}
f_B^{(m+1)}(0) &= -  f_A^{(m+1)}(0)\nonumber\\
f_B^{(m+1)}(1) &= -  f_A^{(m+1)}(1)\nonumber\\
f_B^{(j)}(\Delta) &=  f_A^{(j)}(\Delta)~\forall~j\in [0,m+2]\nonumber\\
f_B^{(j)}(1-\Delta) &=  f_A^{(j)}(1-\Delta)~\forall~j\in [0,m+2]\nonumber\\
f_B^{(j)}(1) &=  0~\forall~j\in (0,m],\label{eq:twodogearc}
\end{align}
and picking $T_A=T_B$ with $\theta=\pi/4$.  This enables exponentially accurate adiabatic approximations if $m$ is chosen as a function of $T$.

Alternatively, cancellation of the~\emph{leading order} transition to $O(1/T^{m+1})$ can be obtained by using the exact same ideas within the methods
of~\sec{partial} and~\sec{complete}
after using the substitutions in~\eq{subs} and conditions similar to~\eq{twodogearc} for the polynomial interpolations.
It should also be noted that in systems where the adiabatically transported subspace is a one-- or two--dimensional subspace then such approaches also
can be used to make the overall error scaling $O(1/T^{m+1})$.  This raises the possibility that controlled adiabatic evolution can be combined with
boundary cancellation methods to substantially reduce the cost of performing a high--accuracy adiabatic state preparation.

\section{Costing Controlled Adiabatic Evolutions\label{sec:qcimp}}
There are two types of costly resources in coherently controlled adiabatic evolutions.  The first is the cost of evolving the adiabatic register,
denoted $\ket{\psi}$ in~\fig{circuit}.  The second is the cost of performing the required rotations on the control register.  In this section we will
provide a complete cost analysis of this model under the assumption that it is being simulated using a circuit based quantum computer that is further
equipped with oracles that compute the necessary properties of the Hamiltonian.  We will then conclude that coherently controlled adiabatic using
sparse, row--computable Hamiltonians evolution is polynomially equivalent to the circuit model. Other appropriate cost models, such as bounding the
energy and time required to implement the controlled Hamiltonians using $k$--local Hamiltonians will not be discussed here.  Since we show that
coherently controlled adiabatic evolution is polynomially 
equivalent to the circuit model, it will immediately follow that it is also polynomially equivalent to adiabatic quantum computation using local
Hamiltonians.

The first important result that we need to show this is an upper bound on the number of oracle queries needed to simulate a time--dependent
Hamiltonian within fixed error on a quantum computer.  We will use this result to upper bound the query complexity of performing the controlled
adiabatic evolutions.  In order to understand the theorem, we will define a smoothness classification for Hamiltonians:
\begin{definition}\label{def:lambdasuzsmooth}
The set of operators $\{H_j:j=1,\dots, m\}$ is $\Lambda$-$P$-smooth on~$\mathcal{I}\subseteq \mathbb{R}$ if 
$\Lambda\geq\left(\sum_{j=1}^m\left\|\partial_t^p H_j(t)\right\|\right)^{1/(p+1)}$,
for all $t\in\mathcal{I}$ and~$p\in\{0,1,\cdots,P\}$.
\end{definition}
Now with~\defin{lambdasuzsmooth} we can state the following corollary, which gives the query complexity of the simulaiton algorithm.  The number of
one-- and two--qubit needed for the simulation is at most proportional to the number of queries made.

\begin{corollary}[Cor. 6 of~\cite{WBHS11}]
\label{cor:piecewise}
Let $\{H_\alphavar:\mathbb{R}\mapsto\mathbb{C}^{2^n\times 2^n};\alphavar=1,\ldots,M\}$ be a set of time-dependent Hermitian operators that is
$\Lambdavar$-$2k$-smooth on~$\mathcal{I}=(\initt,\initt+\dt)\setminus\{t_1,\ldots, t_L\}$, where~$\initt<t_1<\cdots<t_L<\initt+\dt$, with the
additional conditions
\begin{enumerate}
\item $\exists~H_{\rm{max}}\in \mathbb{R}: H_{\max}\ge\max_{t\in[\initt,\initt+\dt]}\|H(t)\|$,
\item $0<\epsilon\le\min\{1, 27(5/3)^{k-1}d^2\Lambdavar\dt\}$,
\item $\numBitsInT$ satisfies $\numBitsInT\ge \left\lceil\log_2\left(\frac{(\max_{t\in
\mathcal{I},\alphavar}\|\partial_tH_{\alphavar}(t)\|)(32kMd^2)(5/3)^{k-1}\dt^2}{\epsilon}\right)\right\rceil$, 
\item $\numQubitsInH$ satisfies $\numQubitsInH\ge 2\left\lceil\log_2\left(\frac{32k M d^2(5/3)^{k-1}\Lambdavar\dt}{\epsilon}\right)\right\rceil+6$ and
\item $\dt/2^{\numBitsInT}< \min_{\ell=0,\ldots,L}(t_{\ell+1}-t_\ell)$ with~$t_{L+1}:=\initt+\dt$,
\end{enumerate}
where $\numBitsInT$ is the number of bits used to represent $t$ and $\numQubitsInH$ is the number of qubits used to encode the matrix elements of $H$.
Then the query complexity for simulating evolution generated by $H(t)=\sum_\alphavar T_\alphavar^{\dagger}H_\alphavar(t) T_\alphavar$, for fixed set
of unitary basis changing operators $T_\alphavar$, within an error of~$\epsilon$ using time--ordered Trotter--Suzuki formulas with error
$O(\dt^{2k+1})$ is
\begin{align}
\label{eq:cor3}
\Noracle&\le 12 \Costvar Md^2 5^{k-1}\left[(L+1)+ 24kd^2\Lambdavar \dt\left(\frac{5}{3}
\right)^{k}\left(\frac{6d^2\Lambdavar\dt}{(\epsilon/3)}\right)^{1/2k}\right],
\end{align}
where $\Costvar$ is the number of oracle calls needed to simulate a one-sparse Hamiltonian, and the number of basis change operations is at most
$\Noracle/(3\Costvar d^2)$.
\end{corollary}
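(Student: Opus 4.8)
The plan is to obtain this as a corollary of the main time-dependent simulation theorem of~\cite{WBHS11} by instantiating it with a $(2k)$-th order time-ordered Trotter--Suzuki splitting and verifying the hypotheses encoded in~\defin{lambdasuzsmooth}. First I would reduce the simulation of $H(t)=\sum_\alphavar T_\alphavar^\dagger H_\alphavar(t)T_\alphavar$ to the simulation of a sum of $1$-sparse Hamiltonians: each $d$-sparse term $H_\alphavar$ is decomposed via an edge--coloring of its interaction graph into $O(d^2)$ $1$-sparse pieces, each of which is simulable with $\Costvar$ oracle queries, so that a single exponential $e^{-iH(t)\tau}$ costs $O(\Costvar M d^2)$ queries together with the fixed basis changes $T_\alphavar$. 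This accounts for the prefactor $12\Costvar M d^2$ and for the stated bound $\Noracle/(3\Costvar d^2)$ on the number of basis-change operations.

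Next I would bound the cost of the product formula itself. The symmetric Suzuki construction of order $2k$ is built by the usual recursion, so a single time step is a product of $O(5^{k-1})$ exponentials of the individual (now time-sampled) summands, which is the source of the $5^{k-1}$ factor. The heart of the argument is a per-step error estimate: for the time-ordered formula applied to a time-dependent generator, the local error over a step of width $\dt/r$ scales as $O\!\big((\Lambdavar\,\dt/r)^{2k+1}\big)$, where the norms of the derivatives $\partial_t^p H_\alphavar$ up to order $p=2k$ enter exactly through the smoothness parameter $\Lambdavar$ of~\defin{lambdasuzsmooth}. Summing the local errors over $r$ steps via the triangle inequality and demanding that the total be at most $\epsilon$ yields $r=O\!\big(\Lambdavar\dt\,(\Lambdavar\dt/\epsilon)^{1/2k}\big)$, with the Suzuki constants contributing the $(5/3)^{k}$ and $(6d^2\Lambdavar\dt/(\epsilon/3))^{1/2k}$ factors. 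Multiplying queries-per-exponential, exponentials-per-step, and number of steps then assembles the second term of~\eq{cor3}.

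It then remains to handle the piecewise and discretization subtleties. Because $H$ is only $\Lambdavar$-$2k$-smooth away from the points $t_1,\dots,t_L$, I would apply the per-segment bound on each of the $L+1$ smooth subintervals and use condition~5 to guarantee that the time grid of spacing $\dt/2^{\numBitsInT}$ resolves every subinterval, so that no step straddles a nonsmooth point; this is precisely what contributes the additive $(L+1)$ inside the bracket of~\eq{cor3}. Conditions~2--4 are used to certify that the two sources of representation error---the finite precision $\numBitsInT$ with which the time register is stored and the finite precision $\numQubitsInH$ with which the matrix entries of $H$ are loaded---each contribute error subdominant to $\epsilon$, so that the product-formula error dominates the total budget. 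The main obstacle, and the genuinely technical step, is the per-step error estimate for the \emph{time-ordered} $(2k)$-th order formula: establishing the $O(\dt^{2k+1})$ local scaling requires expanding both the ideal time-ordered propagator and its Suzuki approximant (e.g.\ through a Dyson/Magnus series or an auxiliary clock-register reduction to the time-independent case) and cancelling all terms up to order $2k$, with the residual controlled by $\sum_\alphavar\|\partial_t^p H_\alphavar\|$; everything else---the sparse decomposition, the geometric step count, and the discretization bookkeeping---is routine once this estimate is in hand.
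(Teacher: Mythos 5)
This statement is not proved in the paper at all: it is imported verbatim as Corollary~6 of~\cite{WBHS11} and used as a black box, so there is no internal proof to compare your proposal against. What you have written is, in effect, a reconstruction of the proof strategy of the cited source, and as a reconstruction it is faithful: the edge--coloring decomposition of each \sparse{d} term into $O(d^2)$ one--sparse pieces (accounting for the $\Costvar M d^2$ prefactor and the bound on basis--change operations), the Suzuki recursion giving $O(5^{k-1})$ exponentials per step, the per--step error $O\bigl((\Lambdavar\dt/r)^{2k+1}\bigr)$ controlled by the derivative sums in \defin{lambdasuzsmooth} and summed to give $r = O\bigl(\Lambdavar\dt\,(\Lambdavar\dt/\epsilon)^{1/2k}\bigr)$, the per--segment treatment of the $L+1$ smooth subintervals producing the additive $(L+1)$, and the role of conditions 2--5 in keeping the discretization errors within budget are all the actual ingredients of the proof in~\cite{WBHS11}. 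Two small sharpenings: the error budget there is split into exactly three equal parts ($\epsilon/3$ each for the Trotter--Suzuki error, the finite $\numBitsInT$, and the finite $\numQubitsInH$), which is precisely why $\epsilon/3$ rather than $\epsilon$ appears inside the $1/2k$ power in \eq{cor3}, so ``subdominant'' should really be ``allotted a fixed third''; and the genuinely technical step you correctly flag --- the $O(\dt^{2k+1})$ local error for the \emph{time--ordered} Suzuki formula --- is not reproved in~\cite{WBHS11} either, but rests on the earlier ordered--operator--exponential machinery of Wiebe, Berry, H{\o}yer and Sanders, which proceeds essentially by the Dyson--expansion-and-cancellation route you describe. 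So your outline is correct and matches the source; it simply could not have differed from ``the paper's proof,'' because the paper offers none.
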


Note that we need to use a result for simulating piecewise smooth Hamiltonians because the interpolations used in our method will typically cause
$H(f_j(s))$ to be non--analytic at either one or two points.  The result of~\cor{piecewise} is then useful because it provides the cost of performing
such a simulation despite such complications.  Note that in the cases we consider $L=1$ for partially anti--symmetric boundary conditions and $L=2$
for completely anti--symmetric boundary conditions.  Also, for simplicity we cite a method that does not use adaptively chosen timesteps.  Such
adaptive methods are given in~\cite{WBHS11} and lead to similar scaling where $\Lambda$ is replaced by the time average of the instantaneous values of
$\Lambda$.

There are two types of oracles that are required by this corollary.  Firstly an oracle is required that outputs the location of the $j^{\rm th}$
non--zero matrix element in a given row, where $j\le d$ if $H$ is \sparse{d}.
\begin{equation}
O_1(j):\ \ket{i}\ket{0}\rightarrow \ket{i}\ket{L_j(i)}
\end{equation}
where $L_j(i)$ gives the $j^{\rm th}$ non-zero element in row $i$.  We cost a single query to $O_1$ as $n$ queries to a single qubit oracle because
each query to this oracle yields $n$ bits, and it is more realistic to cost the algorithm by the number of qubits output if the dimension of the
Hilbert space is large.  The corollary also requires an oracle
for matrix elements of $H(s)$
\begin{equation}
O_2:\ \ket{i}\ket{k}\ket{s}\ket{0}\rightarrow
\ket{i}\ket{k}\ket{s}\ket{H(s)_{i,k}}.
\end{equation}

We also use (for convenience) a new oracle, $O_f$, whose role is to prepare a quantum state encoding the particular value of $f_A(s)$ or $f_B(s)$ that
is needed in a given timestep.  In general, if we wish to find the value of $f_p(s)$ we use the oracle in the following way:
\begin{equation}
O_f(s) \ket{p}\ket{0} = \ket{j}\ket{f_p(s)}.
\end{equation}
This oracle is crucial to our approach because it allows us to remove the multiple controls used in~\fig{circuit}.  For example,
\begin{equation}
O_2O_1O_f(s) \sum_{p=1}^N \sum_{x=1}^{2^n} b_p a_{j} \ket{p}\ket{x} \ket{0}\ket{0}\ket{0} = \sum_{p=1}^N \sum_{x=1}^{2^n} b_p a_{j} \ket{p}\ket{x}
\ket{L_{\ell}(x)}\ket{f_p(s)}\ket{H(f_p(s))_{x,L(x,i)}}
\end{equation}

Our cost analysis of the controlled adiabatic evolution follows by converting the controlled evolution in~\fig{circuit} into the evolution of a single
larger Hamiltonian.  This larger Hamiltonian can then be simulated by conventional means (such as a Trotter--Suzuki decomposition as per~\cor{piecewise}).
 
\begin{theorem}\label{thm:efficient}
Assume that we wish to simulate  a coherently controlled adiabatic evolution that uses the controlled evolutions $\{\mathcal{T}e^{-i\int_0^1
H(f_1(s))T_1},\ldots,\mathcal{T}e^{-i\int_0^1 H(f_p(s))T_p}\}$ such that $H(s)$ is a Hamiltonian satisfying $\|H'(s)\| \le \Gamma$ and each $H(f_j)$
is $\Lambda$--$2k$--smooth and all remaining assumptions of~\cor{piecewise} are held for $\Delta t= \max_j T_j$.  The query complexity of performing
the simulation within error at most $\epsilon$ using $k^{\rm th}$--order time--ordered Trotter--Suzuki formulas and oracles that yield one bit per
query obeys
\begin{align}
\label{eq:cor4}
\Noracle&\le 12 \Costvar Md^2 5^{k-1}\left[(L+1)+ 24kd^2\Lambdavar \max_j T_j\left(\frac{5}{3} \right)^{k}\left(\frac{6d^2\Lambdavar\max_j
T_j}{(\epsilon/6)}\right)^{1/2k}\right],
\end{align}
where
\begin{equation}
C\le 4n(z_n +2) + 3n_H + 2\left\lceil\log_2 \left(\frac{6\Gamma \max_j T_j}{\epsilon} \right) \right\rceil,
\end{equation}
 $z_n$ is the number of times that $n\mapsto \lceil 2\log_2 n \rceil$ must be iterated before achieving a value that is less than or equal to $6$ and
$N_{f(s)}\le \lceil \log_2(\Gamma)\rceil$.
\end{theorem}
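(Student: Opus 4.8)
The plan is to reduce the coherently controlled evolution of~\fig{circuit} to a single time--dependent Hamiltonian simulation and then invoke~\cor{piecewise}. First I would lift the controlled evolution to the block--diagonal generator on the joint control--plus--system register,
\begin{equation}
\tilde H(s) = \sum_{p=1}^{M} \ketbra{p}{p}\otimes T_p\, H(f_p(s)),
\end{equation}
so that $\mathcal{T}e^{-i\int_0^1 \tilde H(s)\,\mathrm{d}s}$, restricted to control state $\ket{p}$, reproduces $\mathcal{T}e^{-i\int_0^1 H(f_p(s))T_p}$ exactly; after rescaling to physical time the relevant duration is $\dt=\max_j T_j$. The auxiliary oracle $O_f$ is what lets us fold the path selection into the generator, removing the multiply--controlled rotation of~\fig{circuit} and making $\tilde H$ a genuine single instance of the time--ordered exponential that~\cor{piecewise} simulates.

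Next I would verify that $\tilde H$ inherits every hypothesis of~\cor{piecewise}. Because $\tilde H$ is block diagonal in the control index, its sparsity equals that of $H$, namely $d$, and its norm is controlled by $\max_s\norm{H(s)}$. It is piecewise smooth with the same breakpoints as the interpolations $f_p$, giving $L=1$ for the partially anti--symmetric construction of~\sec{partial} and $L=2$ for the completely anti--symmetric one of~\sec{complete}. The $\Lambda$--$2k$--smoothness of~\defin{lambdasuzsmooth} is preserved with the same $\Lambda$ and with the number of terms equal to the number of controlled paths $M$, since the defining sum $\sum_j\norm{\partial_s^p H(f_j(s))}$ is precisely the quantity appearing there. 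With these checks, applying~\cor{piecewise} at $\dt=\max_j T_j$ yields the query count~\eq{cor4} directly, once the one--sparse simulation cost $\Costvar$ is pinned down.

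The substantive work is bounding $\Costvar$. Here I would build the row and matrix--element oracles for $\tilde H$ from $O_1$, $O_2$ and $O_f$: given a label $p$ and parameter $s$, $O_f$ produces $f_p(s)$, after which $O_2$ returns the entry $H(f_p(s))_{x,y}$ and $O_1$ locates the nonzero columns. Decomposing the $d$--sparse $\tilde H$ into one--sparse pieces uses a graph--colouring subroutine whose depth is the iterated--logarithm quantity $z_n$ (the number of times $n\mapsto\lceil 2\log_2 n\rceil$ must be applied before dropping to~$6$); since each $O_1$ query is costed as $n$ single--qubit queries, this contributes the $4n(z_n+2)$ term, while reading the encoded entries contributes $3n_H$. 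The remaining term $2\lceil\log_2(6\Gamma\max_j T_j/\epsilon)\rceil$ and the bound $N_{f(s)}\le\lceil\log_2(\Gamma)\rceil$ come from propagating a finite--precision evaluation of $f_p(s)$ through $H(\cdot)$: since $\norm{H'(s)}\le\Gamma$, an error $\delta$ in $f_p$ perturbs the generator by at most $\Gamma\delta$, which over duration $\max_j T_j$ must stay within the portion of $\epsilon$ allotted to oracle error. Splitting $\epsilon$ between Trotter--Suzuki truncation and this discretization is also what replaces the $\epsilon/3$ of~\eq{cor3} by $\epsilon/6$ in~\eq{cor4}.

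The main obstacle I anticipate is exactly this precision accounting for $\Costvar$: one must track how the finite bit--length representation of $f_p(s)$ feeds through $O_2$ into the simulated matrix elements, bound the resulting operator error uniformly in $s$ using only $\norm{H'(s)}\le\Gamma$, and then show that the error accumulated over the full time--ordered evolution of length $\max_j T_j$ is controlled by the stated logarithmic number of bits while simultaneously respecting the global error budget. Verifying that the lift to $\tilde H$ leaves $M$, $d$ and $\Lambda$ unchanged is comparatively routine, but it must be stated carefully so that~\eq{cor4} follows with no hidden dependence on the number of paths beyond the explicit factor of $M$.
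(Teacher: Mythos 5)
Your proposal follows essentially the same route as the paper's proof: the block--diagonal dilation $\mathbf{H}(s)=\sum_j \ketbra{j}{j}\otimes H(f_j(s))T_j/\max_j T_j$, verification that sparsity, $M$, and $\Lambda$--$2k$--smoothness carry over so that~\cor{piecewise} applies with $\dt=\max_j T_j$, construction of the dilated oracles from $O_1$, $O_2$ and $O_f$ (invoking the $4n(z_n+2)+3n_H$ bound of~\cite{WBHS11}), and the same $\epsilon/3\to\epsilon/6$ budget split with the finite--precision bound $2^{-n_{f_j}}\Gamma\max_j T_j=\epsilon/6$ yielding the extra $2\left\lceil\log_2(6\Gamma\max_j T_j/\epsilon)\right\rceil$ term. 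Your precision accounting via $\|H'(s)\|\le\Gamma$ and Taylor's theorem is exactly the argument the paper uses, so there is no gap to report.
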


\begin{proof}
To see this, note that the controlled unitary evolutions in~\fig{circuit} produce a time--evolution operator of the following block--diagonal form:
\[
     \begin{bmatrix}
    U_1 &\cdots & 0\\
\vdots & \ddots &\vdots\\
    0 &\cdots& \openone\\
  \end{bmatrix}\times \cdots \times
  \begin{bmatrix}
    \openone &\cdots & 0\\
\vdots & \ddots &\vdots\\
    0 &\cdots& U_p\\
  \end{bmatrix}=  \begin{bmatrix}
    U_1 &\cdots& 0\\
\vdots &\ddots &\vdots\\
    0 &\cdots& U_p\\
  \end{bmatrix},
\]
where $U_1,\ldots,U_p$ are the $p$ controlled adiabatic evolutions.  By expanding out the unitaries as time--ordered operator exponentials, we see
that the ideal time evolution operator is of the form
\begin{equation}
 \begin{bmatrix}
    U_1 &\cdots& 0\\
\vdots &\ddots &\vdots\\
    0 &\cdots& U_p\\
  \end{bmatrix}= \begin{bmatrix}
    \mathcal{T} e^{-i\int_0^1 H(f_1(s)) \mathrm{d}s T_1} &\cdots& 0\\
\vdots &\ddots &\vdots\\
    0 &\cdots& \mathcal{T} e^{-i\int_0^1 H(f_p(s)) \mathrm{d}s T_p} \\
  \end{bmatrix}. \label{eq:controlled}
\end{equation}

Consider the Hamiltonian $H= \sum_j \ketbra{j}{j} \otimes H_j$.  It is easy to see using Taylor expansion and the fact that each of the terms in $H$
commute that
\begin{align}
e^{-iHt} &= e^{-i\sum_j \ketbra{j}{j}\otimes  H_jt}=\prod_{j=1}^p e^{-i \ketbra{j}{j} \otimes H_j t}\nonumber\\
&= \prod_{j=1}^p (\ketbra{j}{j}\otimes e^{-i H_j t}+(\openone-\ketbra{j}{j})\otimes \openone)\nonumber\\
&= \bigoplus_{j=1}^p \ketbra{j}{j}\otimes e^{-i H_j t}.
\end{align}
Here $\bigoplus$ represents the direct sum operation.  Thus we have that
\begin{equation}
e^{-iHt} = \begin{bmatrix}
     e^{-iH_1 t} &\cdots& 0\\
\vdots &\ddots &\vdots\\
    0 &\cdots&  e^{-iH_p t} \\
  \end{bmatrix}.\label{eq:blockdiag}
\end{equation}

Now let $\mathbf{H}(s)= \sum_j \ketbra{j}{j} \otimes H_j(s)$.  It then follows from the definition of the ordered--operator exponential and the
block--diagonal structure of~\eq{blockdiag} that
\begin{align}
\mathcal{T} e^{-i\int_0^1 H(s) \mathrm{d}s T}&= \lim_{r\rightarrow \infty} \prod_{j=1}^r \begin{bmatrix}
     e^{-iH_1(j/r) T/r} &\cdots& 0\\
\vdots &\ddots &\vdots\\
    0 &\cdots&  e^{-iH_p(j/r) T/r}
\end{bmatrix}\nonumber\\
&= \begin{bmatrix}
    \mathcal{T} e^{-i\int_0^1 H_1(s) \mathrm{d}s T} &\cdots& 0\\
\vdots &\ddots &\vdots\\
    0 &\cdots& \mathcal{T} e^{-i\int_0^1 H_p(s) \mathrm{d}s T} \\
  \end{bmatrix}. \label{eq:blockdiag2}
\end{align}
It then follows that the controlled evolutions in~\eq{controlled} can be expressed as a simulation of a single time--dependent Hamiltonian by taking
$H_j(s) \rightarrow H(f_j(s)) T_j /T$ in~\eq{blockdiag2}.  For simplicity, let us take $T=\max_j T_j$.

Next we need to find the properties of the dilated Hamiltonian $H(s)$ that describes the controlled evolutions in the controlled adiabatic evolution. 
Firstly, assuming each $H_j$ is the sum of $M$  Hamiltonians that can be efficiently transformed into \sparse{d} matrices, it follows that
$\mathbf{H}$ can be expressed as a similar sum.  Similarly, since $H(s) = \sum_j \ketbra{j}{j}\otimes H(f_j(s))T_j/T$, it follows from the fact that
$\mathbf{H}(s)$ has a direct sum structure and the assumption that each $H(f_j(s))$ is $\Lambda$--$2k$--smooth that for any non--negative integer
$q\le 2k$
\begin{equation}
\| \partial_s^q \mathbf{H}(s) \| = \max_j \|\partial_s^q H(f_j(s))\|\frac{T_j}{T} \le \Lambda^{q+1}.
\end{equation}
Hence, for $T=\max_j T_j$, $H(s)$ is at most $\Lambda$--$2k$--smooth.

We then have from~\eq{cor3} that the cost of simulating the effective Hamiltonian $\mathbf{H}(s)$ using $k^{\rm th}$--order Trotter--Suzuki formulas
is at most
\begin{align}
\label{eq:cor5}
\Noracle&\le 12 \Costvar Md^2 5^{k-1}\left[(L+1)+ 24kd^2\Lambdavar \max_j T_j\left(\frac{5}{3} \right)^{k}\left(\frac{6d^2\Lambdavar\max_j
T_j}{(\epsilon/3)}\right)^{1/2k}\right].
\end{align}

The remaining issue is the calculation of $C$.  In order to compute $C$ we need to first show that we can simulate a query to the Hamiltonian oracles
for $\mathbf{H}$ using those for $H$.  We specifically require two oracles: one that computes the locations of the $i^{\rm th}$ (potentially)
non--zero matrix element in any row of $\mathbf{H}$ and another that evaluates that matrix element at a fixed value of $s$.

The oracle for finding the column index for a specified element in row $x$ of $\mathbf{H}$ can be constructed as follows.  The oracle $O_1$ has the
property that
\begin{equation}
O_1(q)\ket{x} \ket{y(q)},
\end{equation}
where $y(q)$ is the column index of the $q^{\rm th}$ element in row $x$.  Then for any $j$ we can construct the corresponding oracle by exploiting the
block diagonal structure of $\mathbf{H}$ via
\begin{equation}
\mathbf{O}_1(q) \ket{j}\ket{x} \ket{0} = \ket{j}\ket{x} \ket{y(q)+2^n(j-1)}=\ket{j}\ket{x} \ket{j}\ket{y(q)}.
\end{equation}
The oracle $\mathbf{O}_1(q)$ can therefore be enacted using one query to $O_1$ and a polynomial size arithmetic circuit.

The second oracle $\mathbf{O}_2(q)$ gives for a specific value of $s$ that is specified, the value of $H(s)$.  Specifically, after taking into account
the block diagonal structure of $\mathbf{H}$, we need the oracle to be of the form
\begin{equation}
\mathbf{O}_2 \ket{j}\ket{x}\ket{y}\ket{s} \ket{0} = \ket{j}\ket{x}\ket{y}\ket{[{H}(f_j(s))]_{x,y}}.
\end{equation}
This oracle can be implemented using one query to $O_f$ and one query to $O_2$.

In~\cite{WBHS11}, it is assumed that the time is provided to the oracles via classical control.  Here, we assume that the time is provided via a
quantum register so we must add the cost of preparing this register to the cost, $C$, of simulating a one--sparse matrix.  Lemma 9 of~\cite{WBHS11}
gives us that, the query complexity (costed at 1/per bit of output yielded by $\mathbf{O}_1$ and $\mathbf{O}_2$) is
\begin{equation}
C\le 4n(z_n +2) + 3n_H.\label{eq:Ceq}
\end{equation}
For each one--sparse Hamiltonian that appears in the Trotter--Suzuki decomposition, the time register must be initialized once~\cite{WBHS11}.  This
causes an additional source of error and if we are to fit it within our error budget, we must reduce the error in other parts of the simulation
algorithm.  There are three sources of error in the simulation algorithm: Trotter--Suzuki error, error due to finite $n_H$ and error due to finite
$n_T$ (we have neglected errors in synthesizing single qubit operations etc).  Each of these three sources of error is chosen to be at
most~$\epsilon/3$ in~\cite{WBHS11}.  Therefore, if we reduce the error tolerance in the Trotter--Suzuki approximation to $\epsilon/6$ and allow an
error tolerance of $\epsilon/6$ for approximating $f_j(s)$ then the overall error will remain at most $\epsilon$.  Thus the overall complexity becomes
\begin{align}
\label{eq:cor6}
\Noracle&\le 12 \Costvar Md^2 5^{k-1}\left[(L+1)+ 24kd^2\Lambdavar \max_j T_j\left(\frac{5}{3} \right)^{k}\left(\frac{6d^2\Lambdavar\max_j
T_j}{(\epsilon/6)}\right)^{1/2k}\right].
\end{align}

The error in $e^{-i \mathbf{H}(s) T}$ is at most $\|\Delta \mathbf{H}(s)\| T$~\cite{NC00}, where $\Delta \mathbf{H}(s)$ is the error in implementing
the Hamiltonian.  By Taylor's theorem this is at most $\Gamma \max_j|\Delta f_j(s)| \max_j T_j$, where $\Delta f_j(s)$ is the error incurred by
approximating $f_j(s)$ to a finite number of digits. Let us define the number of digits used to express $f_j$ as $n_{f_j}$.  Then the error in $f_j$
is $\Delta f_j \le 2^{-n_{f_j}}$.  Hence it suffices to choose
\begin{equation}
2^{-n_{f_j}} \Gamma \max_j T_j = \epsilon/6.\label{eq:erroreq}
\end{equation}
Thus since we have to both compute the value of $f_j(s)$ to $n_{f_j}$ bits of precision using queries to $O_2$ and then uncompute it,
equations~\eq{erroreq} and~\eq{Ceq} give
\begin{equation}
C\le 4n(z_n +2) + 3n_H+2\left\lceil\log_2 \left(\frac{6\Gamma \max_j T_j}{\epsilon}\right) \right\rceil,
\end{equation}
as claimed.
\end{proof}
We therefore see from~\thm{efficient} that this model of adiabatic computation can be efficiently simulated using the posited oracles under reasonable
smoothness assumptions.  This naturally leads to the following corollary:

\begin{corollary}
Let $f_j(s):j=1,\ldots, p$ efficiently computable functions, $H(s)=\sum_{\mu=1}^M T_\mu H_\mu(s) T^\dagger_\mu$ where each $H_\mu(f_j(s))$ is a
\sparse{d} row--computable matrix for all $s$ and $p\in O({\rm poly}(n))$.  If the conditions of~\thm{efficient} are satisfied for the adiabatic paths
$\{f_1,\ldots,f_p\}$ then controlled adiabatic evolution under $\{H(f_1(s)), \ldots, H(f_p(s))\}$ is polynomially equivalent to both the circuit model
and in turn adiabatic quantum computation.
\end{corollary}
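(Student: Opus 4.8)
The plan is to prove the two directions of polynomial equivalence between controlled adiabatic evolution and the circuit model separately, and then to obtain equivalence with adiabatic quantum computation by transitivity, using the known fact that adiabatic quantum computation with local Hamiltonians is already polynomially equivalent to the circuit model.

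First I would treat the direction asserting that a quantum circuit can simulate the controlled adiabatic evolution with only polynomial overhead. This is essentially a repackaging of \thm{efficient}. Under the hypotheses of the corollary the interpolation functions $f_j$ are efficiently computable and each $H_\mu(f_j(s))$ is a \sparse{d} row--computable matrix, so the oracles $O_1$, $O_2$, and $O_f$ can each be synthesized by a circuit of size $\mathrm{poly}(n)$. I would then check that, in the regime of interest where $d$, $M$, $k$, $\Lambda$, $\Gamma$, and $\max_j T_j$ are all at most $\mathrm{poly}(n)$ and $L\in\{1,2\}$ with $p\in O(\mathrm{poly}(n))$, every parameter appearing in the query--complexity bound \eq{cor4} is polynomially bounded, so that $\Noracle$ is itself $\mathrm{poly}(n)$. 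Since the number of one-- and two--qubit gates is at most proportional to $\Noracle$ and each query is implemented by a poly--size oracle circuit, the controlled evolution is simulated by a circuit of polynomial size.

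Next I would treat the reverse direction, that controlled adiabatic evolution can simulate an arbitrary quantum circuit with polynomial overhead. The key observation is that ordinary, uncontrolled adiabatic evolution is just the special case $p=1$ of the controlled model, in which the control register is trivial and no averaging is performed. Since adiabatic quantum computation with sparse local Hamiltonians is already known to be polynomially equivalent to the circuit model, any circuit may first be compiled into a standard adiabatic evolution and then re--interpreted as a single--branch controlled evolution, yielding the reverse embedding at polynomial cost. Composing the two directions gives polynomial equivalence of controlled adiabatic evolution with the circuit model, and transitivity through the circuit--to--adiabatic equivalence then gives equivalence with adiabatic quantum computation as claimed.

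The step I expect to be the main obstacle is verifying that the non--deterministic, post--selected nature of the gadget in~\fig{circuit} does not spoil the polynomial bounds in the forward direction. Here I would invoke \eq{pfail} together with the phase choice of~\eq{phases}, which force the failure probability to be $O(1/T)$ and hence the success probability $P(0)$ appearing in the cost metric to be bounded below by a constant for the evolution times under consideration. Consequently the expected number of repetitions needed to obtain a successful measurement outcome is $O(1)$, and dividing the cost by $P(0)$ inflates it by at most a constant factor, so the equivalence survives the introduction of measurement.
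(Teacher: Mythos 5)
Your proposal is correct and follows essentially the same route as the paper's own proof: the forward direction invokes \thm{efficient} together with the observation that row--computability and efficient computability of the $f_j$ (with $p\in O(\mathrm{poly}(n))$) let you synthesize $O_1$, $O_2$, and $O_f$ with polynomial--size circuits, and the reverse direction embeds standard adiabatic quantum computation as the trivial single--path case of the controlled model, citing its known equivalence to the circuit model for local (hence \sparse{d}) Hamiltonians. Your closing remark on the post--selection success probability is a sensible supplement the paper's corollary proof leaves implicit, relying instead on its earlier discussion of \eq{pfail} and \eq{phases}.
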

\begin{proof}
We know that a circuit simulation of the controlled adiabatic evolution is efficient under the assumptions of~\thm{efficient} given access to the
oracles $O_1$, $O_2$ and $O_f$.  If $H(s)$ is row computable, then it implies that there exist efficient algorithms to find the locations and values
of each non--zero matrix element of $H(s)$.  Thus $O_1$  and $O_2$ can be implemented efficiently by the definition of row computability.

$O_f$ can be efficiently computed for each $j$ by assumption and hence for any fixed $j$ and $s$ the state $\ket{f_j(s)}$ can be prepared efficiently.
 Furthermore, because $p\in O({\rm poly}(n))$, it follows that the state $\sum_{j=1}^p a_j \ket{j} \ket{f_j(s)}$ can be efficiently prepared.  Thus
$O_f$ can be efficiently simulated in the circuit model as well.  This implies that quantum computers can efficiently simulate this class of
coherently controlled adiabatic evolutions.

Local Hamiltonians are a subset of \sparse{d} Hamiltonians.  Therefore the class of adiabatic evolutions considered includes a set of Hamiltonians
that generate a family of evolutions that are polynomially equivalent to the circuit model~\cite{AVD+08}.  Thus if we ignore the control register then
the controlled adiabatic evolution can be reduced to a universal adiabatic quantum computer.  Thus our model of computation is polynomially equivalent
to both the circuit model and adiabatic quantum computation. 
\end{proof}

We now see that controlled adiabatic quantum computation using piecewise smooth, sparse, bounded, row--computable Hamiltonians is not an exponentially
more powerful model of computation than traditional adiabatic computation.  Apart from showing that this is a reasonable model of quantum computation,
it also shows that the maximum evolution time used is a reasonable metric for the cost of the evolution (once made dimensionless by multiplying by a
characteristic energy of the system).  For most of the adiabatic paths considered, the contribution of the derivatives of the Hamiltonian to $\Lambda$
is negligible and thus in practice it suffices to ignore their contributions.  Also, because this algorithm scales near--linearly with the evolution
time, this analysis clearly shows that our model can only potentially provide sub--polynomial speedups over circuit based quantum computation for
fixed $d$ and $M$.
\section{conclusion}
The big question that our work addresses is: does coherent control over an adiabatic state preparation protocol yield any concrete benefit?  We find
that it does.  Specifically, it allows us to combine the best features of local adiabatic evolution and boundary cancellation methods, which are two
optimization strategies that are traditionally at odds with each other.  These combined strategies provide better error scaling than any known
adiabatic optimization technique for evolution times that are close to the transition between the Landau--Zener regime and the adiabatic regime.  We
finally provide an explicit quantum simulation algorithm for simulating our protocol, and in turn traditional adiabatic algorithms, that explicitly
gives the cost of simulating the controlled adiabatic evolution using a quantum computer and find that this cost scales near--linearly in the
evolution time.

These results only begin to scratch the surface of what is possible within this paradigm.  Our approach explicitly uses linear combinations of unitary
operations that are nearly unitary.  This raises an interesting question of whether truly non--unitary processes will be of use in optimizing
adiabatic passage.  Progress towards this goal has already been reported in~\cite{xu+14}.  Also, techniques similar to ours may be of value in phase
randomization protocols similar to~\cite{BKS09}.  Ultimately, these ideas may even lead to more natural ways of performing error correction or
suppression in a coherently controlled adiabatic quantum computer.  These are just a few examples of the many avenues of research that are opened by
this work.

\acknowledgements
This research was carried out while MK was 
visiting the Institute for Quantum Computing at the University of Waterloo and MK  is grateful for their kind hospitality.  We  thank M. Mosca and I.
Hen for valuable comments and feedback.  NW acknowledges
funding from USARO-DTO, NSERC and CIFAR. MK acknowledges support
from APVV QUTE. 

\appendix

\section{Suppressing both transitions in a three level system}\label{app:threelevel}
The approaches used to cancel the first order transitions for a two level system
can be generalized for larger systems as well. In a case of a three
level system, we must ensure that transitions to both first and second excited
state are $O(1/T)$. This occurs if the following conditions are met
\begin{align}
0=&\cos^2{(\theta)}  
\left[\frac{\braket{\dot{e}_1(1)}{g(1)}}{-i\gamma^A_{g,e_1}\left(1\right)T_A} -
e^{i\int_0^1\gamma^A_{g,e_1}(\xi)d\xi T_A}
\frac{\braket{\dot{e_1}(0)}{g(0)}}{-i\gamma^A_{g,e_1}\left(0\right)T_A}\right],
\nonumber\\
&+\sin^2{(\theta)}
\left[\frac{\braket{\dot{e}'_1(1)}{g(1)}}{-i\gamma^B_{g,e_1}\left(1\right)T_B} -
e^{i\int_0^1\gamma^B_{g,e_1}(\xi)d\xi T_B}
\frac{\braket{\dot{e_1}(0)}{g(0)}}{-i\gamma^B_{g,e_1}\left(0\right)T_B}\right]
\label{first_excitation}\\
0=&\cos^2{(\theta)}
\left[\frac{\braket{\dot{e_2}(1)}{g(1)}}{-i\gamma^A_{g,e_2}\left(1\right)T_A}
-e^{i\int_0^1\gamma^A_{g,e_2}(\xi)d\xi
T_A}
\frac{\braket{\dot{e_2}(0)}{g(0)}}{-i\gamma^B_{g,e_1}\left(0\right)T_A}\right]
\nonumber\\
&+\sin^2{(\theta)}\left[\frac{\braket{\dot{e}'_2(1)}{g(1)}}{-i\gamma^B_{g,e_2}
\left(1\right)T_B} -e^{i\int_0^1\gamma^B_{g,e_2}(\xi)d\xi
T_B}
\frac{\braket{\dot{e}'_2(0)}{g(0)}}{-i\gamma^B_{g,e_2}\left(0\right)T_B}\right],
\label{second_excitation}.
\end{align}
where the the first evolution corresponds to a Hamiltonian $H(f^A(s))$ and the
second one to $H(f^B(s))$ with its states denoted by primes,
parameterizing Hamiltonian with a single function as in the last paragraph.
Moreover, we assume that $H(f^A(s))$ and $H(f^B(s))$ are equal at the
beginning and the end of evolution (but their derivatives with respect to s
differ). 

It is straightforward to cancel transitions at certain (discrete) times using our
knowledge from the $2$ level case when we realize
\begin{align}
\braket{\dot{e}_1(0)}{g(0)}&=\mvalue{e_1(0)}{\dot{f}(s)|_{s=0}H_1}{g(0)}\\
\braket{\dot{e}_2(0)}{g(0)}&=\mvalue{e_2(0)}{\dot{f}(s)|_{s=0}H_1}{g(0)}\\
\braket{\dot{e}_1(1)}{g(1)}&=\mvalue{e_1(1)}{-\dot{f}(s)|_{s=1}H_0}{g(1)}\\
\braket{\dot{e}_2(1)}{g(1)}&=\mvalue{e_2(1)}{-\dot{f}(s)|_{s=1}H_0}{g(1)}.
\end{align}
Hence, by choosing $f^A(s)$ and $f^B(s)$ as in \sec{partial} and using \eq{2level_weight}, we get rid of terms
containing derivatives at the end for both levels. This approach trivially generalizes to higher--dimensional systems.

Now we need to fix $T_A$ and $T_B$ in order to remove the end derivatives parts
by requiring that evolutions gain opposite phases. We can rewrite
already simplified \eqref{first_excitation}, \eqref{second_excitation} as 
\begin{equation}
 \begin{pmatrix}
  \int_0^1\gamma^A_{g,e_1}(\xi)d\xi & \int_0^1\gamma^B_{g,e_1}(\xi)d\xi \\
  \int_0^1\gamma^A_{g,e_2}(\xi)d\xi & \int_0^1\gamma^B_{g,e_2}(\xi)d\xi
 \end{pmatrix}
  \begin{pmatrix}
  T_A \\
  T_B \\
 \end{pmatrix}
 =
  \begin{pmatrix}
  (2n+1)\pi \\
  (2m+1)\pi
 \end{pmatrix}.
\end{equation}
This system of equation has a solution, unless the determinant of the matrix
equals zero. Note, that with this approach we get only a discrete
set of times $T_A$ and $T_B$ for which the error vanishes in contrast to many of our prior methods.

Error suppression can also be achieved for arbitrary time if we use more than 2
evolutions. A 2-level inspired solution uses 4 unitaries $U_A$, $U_B$,
$U_C$ and $U_D$ where $U_A$ and $U_C$ are given by Hamiltonian $H(f^A(s))$ and
$U_B$ and $U_D$ by $H(f^B(s))$. We pick the functions $f^A,f^B$ based on \sec{partial}. The
goal is then to find times $T_A$-$T_D$ and weights $a$-$d$ such that
following equations are satisfied
\begin{align}
0&=\frac{\braket{\dot{e}_1(1)}{g(1)}}{\gamma^A_{g,e_1}\left(1\right)}\left[
\frac{a}{T_A}-\frac{b}{T_B}+\frac{c}{T_C}- \frac{d}{T_D}\right]\nonumber\\
&+ \frac{\braket{\dot{e_1}(0)}{g(0)}}{-i\gamma^A_{g,e_1}\left(0\right)}\left[
\frac{a}{T_A} e^{i\int_0^1\gamma^A_{g,e_1}(\xi)d\xi T_A} +  \frac{b}{T_B}
e^{i\int_0^1\gamma^B_{g,e_1}(\xi)d\xi T_B} + \frac{c}{T_C}
e^{i\int_0^1\gamma^A_{g,e_1}(\xi)d\xi T_C} + \frac{d}{T_D}
e^{i\int_0^1\gamma^B_{g,e_1}(\xi)d\xi T_D}\right] \\
0&=\frac{\braket{\dot{e}_2(1)}{g(1)}}{\gamma^A_{g,e_2}\left(1\right)}\left[
\frac{a}{T_A}-\frac{b}{T_B}+\frac{c}{T_C}- \frac{d}{T_D}\right]\nonumber\\
&+ \frac{\braket{\dot{e_2}(0)}{g(0)}}{-i\gamma^A_{g,e_2}\left(0\right)}\left[
\frac{a}{T_A} e^{i\int_0^1\gamma^A_{g,e_2}(\xi)d\xi T_A} +  \frac{b}{T_B}
e^{i\int_0^1\gamma^B_{g,e_2}(\xi)d\xi T_B} + \frac{c}{T_C}
e^{i\int_0^1\gamma^A_{g,e_2}(\xi)d\xi T_C} + \frac{d}{T_D}
e^{i\int_0^1\gamma^B_{g,e_2}(\xi)d\xi
T_D}\right] 
\end{align}

First, the normalization condtion
\begin{equation}
a+b+c+d=1 \label{norm}
\end{equation}
must hold. 
Second, we choose $b$ and $T_B$ such that they cancel the error on first level
from evolution by $U_A$. This is exactly the same problem we solved for 
a 2 level system, hence the proper $b$ and $T_B$ are 
\begin{align}
T_B&=\frac{\int_0^1 \gamma^A_{g,e_1}\left(\xi\right)d \xi T_A
+\left(2n+1\right)\pi}{\int_0^1 \gamma^B_{g,e_1}\left(\xi\right) d \xi}\\
b&=\frac{a T_B}{T_A}.
\end{align}

The same can be done for $U_C$ and $U_D$:
\begin{align}
T_D&=\frac{\int_0^1 \gamma^A_{g,e_1}\left(\xi\right)d \xi T_C
+\left(2n+1\right)\pi}{\int_0^1 \gamma^B_{g,e_1}\left(\xi\right) d \xi}\\
d&=\frac{c T_D}{T_C}.
\end{align}
This suppresses the first transition out of $\ket{g}$ (typically the transition to the first excited state) to $O(1/T^2)$. In addition, the errrors
from the derivatives
at the end on the
second excited state cancel as well.

Finally, we are left with
\begin{align}
0=&\frac{a}{T_A}\left[ e^{i\int_0^1\gamma^A_{g,e_2}(\xi)d\xi T_A} +
e^{\frac{\int_0^1\gamma^B_{g,e_2}(\xi)d\xi}{\int_0^1\gamma^B_{g,e_1}(\xi)d\xi}
\left(
\int_0^1 \gamma^A_{g,e_1}\left(\xi\right)d\xi T_A +\left(2n+1\right)\pi
\right)}\right]\nonumber\\
&+\frac{c}{T_C}\left[ e^{i\int_0^1\gamma^A_{g,e_2}(\xi)d\xi T_C} +
e^{\frac{\int_0^1\gamma^B_{g,e_2}(\xi)d\xi}{\int_0^1\gamma^B_{g,e_1}(\xi)d\xi}
\left(
\int_0^1 \gamma^A_{g,e_1}\left(\xi\right)d\xi T_C +\left(2m+1\right)\pi
\right)}\right]. \label{3levels}
\end{align}
Therefore we can set the value of $T_C$ and ratio of $a$ and $c$ and we are still
free to choose arbitrary $T_A$. After some algebra we can rewrite \eqref{3levels}
\begin{align}
&\frac{a}{T_A} e^{ \frac{i}{2}\left(\int_0^1\gamma^A_{g,e_2}(\xi)d\xi T_A +
\frac{\int_0^1\gamma^B_{g,e_2}(\xi)d\xi}{\int_0^1\gamma^B_{g,e_1}(\xi)d\xi}
\left(\int_0^1 \gamma^A_{g,e_1}\left(\xi\right)d\xi T_A +\left(2n+1\right)\pi\right)
\right)} \nonumber\\
&\qquad\times\cos{\left[ 
\frac{1}{2}\left(\int_0^1\gamma^A_{g,e_2}(\xi)d\xi T_A +
\frac{\int_0^1\gamma^B_{g,e_2}(\xi)d\xi}{\int_0^1\gamma^B_{g,e_1}(\xi)d\xi}
\left(\int_0^1 \gamma^A_{g,e_1}\left(\xi\right)d\xi T_A +\left(2n+1\right)\pi
\right)\right)
\right]}\nonumber\\
&=-\frac{c}{T_C} e^{ \frac{i}{2}\left(\int_0^1\gamma^A_{g,e_2}(\xi)d\xi T_C +
\frac{\int_0^1\gamma^B_{g,e_2}(\xi)d\xi}{\int_0^1\gamma^B_{g,e_1}(\xi)d\xi}
\left(\int_0^1 \gamma^A_{g,e_1}\left(\xi\right)d\xi T_C +\left(2m+1\right)\pi\right)
\right)} \nonumber\\
&\qquad\times\cos{\left[
\frac{1}{2}\left(\int_0^1\gamma^A_{g,e_2}(\xi)d\xi T_C +
\frac{\int_0^1\gamma^B_{g,e_2}(\xi)d\xi}{\int_0^1\gamma^B_{g,e_1}(\xi)d\xi}
\left(\int_0^1 \gamma^A_{g,e_1}\left(\xi\right)d\xi T_C +\left(2m+1\right)\pi
\right)\right)
\right]}.
\end{align}
We can ensure that both sides of the equation pick the same phases (up to $2k\pi$) by setting $T_C$ and then
we only need a value of $k$ for which the cosines would have the same sign. That is possible, unless
\begin{equation}
\int_0^1\gamma^A_{g,e_2}(\xi)d\xi +
\frac{\int_0^1\gamma^B_{g,e_2}(\xi)d\xi}{\int_0^1\gamma^B_{g,e_1}(\xi)d\xi}
\int_0^1 \gamma^A_{g,e_1}\left(\xi\right)d\xi\equiv 0 \mod 2\pi.
\end{equation}  
 This procedure can also be used to find paths that cancel multiple transitions in higher--dimensional systems but a closed form may not necessarily
exist, unlike the present case.

\begin{thebibliography}{10}

\bibitem{oreg1984adiabatic}
J.~Oreg, F.~Hioe, and J.~Eberly, ``Adiabatic following in multilevel systems,''
  {\em Physical Review A}, vol.~29, no.~2, p.~690, 1984.

\bibitem{kuklinski1989adiabatic}
J.~Kuklinski, U.~Gaubatz, F.~T. Hioe, K.~Bergmann, {\em et~al.}, ``Adiabatic
  population transfer in a three-level system driven by delayed laser pulses,''
  {\em Physical Review A}, vol.~40, no.~11, pp.~6741--6744, 1989.

\bibitem{FG00}
E.~Farhi, J.~Goldstone, S.~Gutmann, and M.~Sipser, ``Quantum computation by
  adiabatic evolution,'' {\em arXiv preprint quant-ph/0001106}, 2000.

\bibitem{biamonte2011adiabatic}
J.~D. Biamonte, V.~Bergholm, J.~D. Whitfield, J.~Fitzsimons, and
  A.~Aspuru-Guzik, ``Adiabatic quantum simulators,'' {\em AIP Advances},
  vol.~1, no.~2, p.~022126, 2011.

\bibitem{FW13}
F.~Motzoi and F.~K. Wilhelm, ``Improving frequency selection of driven pulses
  using derivative-based transition suppression,'' {\em Physical Review A},
  vol.~88, no.~6, p.~062318, 2013.

\bibitem{MG14}
J.~M. Martinis and M.~R. Geller, ``Fast adiabatic control of qubits using
  optimal windowing theory,'' {\em arXiv preprint arXiv:1402.5467}, 2014.

\bibitem{kato50}
T.~Kato, ``On the adiabatic theorem of quantum mechanics,'' {\em Journal of the
  Physical Society of Japan}, vol.~5, no.~6, pp.~435--439, 1950.

\bibitem{RC02}
J.~Roland and N.~J. Cerf, ``Quantum search by local adiabatic evolution,'' {\em
  Physical Review A}, vol.~65, no.~4, p.~042308, 2002.

\bibitem{VDM+01}
W.~Van~Dam, M.~Mosca, and U.~Vazirani, ``How powerful is adiabatic quantum
  computation?,'' in {\em Foundations of Computer Science, 2001. Proceedings.
  42nd IEEE Symposium on}, pp.~279--287, IEEE, 2001.

\bibitem{LRH09}
D.~A. Lidar, A.~T. Rezakhani, and A.~Hamma, ``Adiabatic approximation with
  exponential accuracy for many-body systems and quantum computation,'' {\em
  Journal of Mathematical Physics}, vol.~50, p.~102106, 2009.

\bibitem{WB12}
N.~Wiebe and N.~S. Babcock, ``Improved error-scaling for adiabatic quantum
  evolutions,'' {\em New Journal of Physics}, vol.~14, no.~1, p.~013024, 2012.

\bibitem{RPL10}
A.~Rezakhani, A.~Pimachev, and D.~Lidar, ``Accuracy versus run time in an
  adiabatic quantum search,'' {\em Physical Review A}, vol.~82, no.~5,
  p.~052305, 2010.

\bibitem{CW12}
A.~M. Childs and N.~Wiebe, ``Hamiltonian simulation using linear combinations
  of unitary operations,'' {\em Quantum Information \& Computation}, vol.~12,
  no.~11-12, pp.~901--924, 2012.

\bibitem{Hen14}
I.~Hen, ``Quantum adiabatic circuits,'' {\em arXiv preprint arXiv:1401.5172},
  2014.

\bibitem{ZR99}
P.~Zanardi and M.~Rasetti, ``Holonomic quantum computation,'' {\em Physics
  Letters A}, vol.~264, no.~2, pp.~94--99, 1999.

\bibitem{Mag54}
W.~Magnus, ``On the exponential solution of differential equations for a linear
  operator,'' {\em Communications on Pure and Applied Mathematics}, vol.~7,
  no.~4, pp.~649--673, 1954.

\bibitem{Joa75}
C.~J. Joachain, ``Quantum collision theory,'' 1975.

\bibitem{LBM+95}
T.~Levante, M.~Baldus, B.~Meier, and R.~Ernst, ``Formalized quantum mechanical
  floquet theory and its application to sample spinning in nuclear magnetic
  resonance,'' {\em Molecular Physics}, vol.~86, no.~5, pp.~1195--1212, 1995.

\bibitem{Joy94}
A.~Joye, ``Proof of the landau--zener formula,'' {\em Asymptotic Analysis},
  vol.~9, no.~3, pp.~209--258, 1994.

\bibitem{adiab_error}
D.~Cheung, P.~Hoyer, and N.~Wiebe, ``{Improved error bounds for the adiabatic
  approximation},'' {\em Journal of Physics A:}, pp.~1--23, 2011.

\bibitem{JSR}
S.~Jansen, M.-B. Ruskai, and R.~Seiler, ``Bounds for the adiabatic
  approximation with applications to quantum computation,'' {\em Journal of
  Mathematical Physics}, vol.~48, no.~10, 2007.

\bibitem{teufel2003adiabatic}
S.~Teufel, {\em Adiabatic perturbation theory in quantum dynamics}.
\newblock Springer, 2003.

\bibitem{SL05}
M.~Sarandy and D.~Lidar, ``Adiabatic approximation in open quantum systems,''
  {\em Physical Review A}, vol.~71, no.~1, p.~012331, 2005.

\bibitem{MMP06}
R.~MacKenzie, E.~Marcotte, and H.~Paquette, ``Perturbative approach to the
  adiabatic approximation,'' {\em Physical Review A}, vol.~73, no.~4,
  p.~042104, 2006.

\bibitem{ghosh2013high}
J.~Ghosh, A.~Galiautdinov, Z.~Zhou, A.~N. Korotkov, J.~M. Martinis, and M.~R.
  Geller, ``High-fidelity controlled-$\sigma$ z gate for resonator-based
  superconducting quantum computers,'' {\em Physical Review A}, vol.~87, no.~2,
  p.~022309, 2013.

\bibitem{farhi1992functional}
E.~Farhi and S.~Gutmann, ``The functional integral constructed directly from
  the hamiltonian,'' {\em Annals of Physics}, vol.~213, no.~1, pp.~182--203,
  1992.

\bibitem{lin_comb}
N.~Wiebe and A.~M. Childs, ``{Hamiltonian Simulation Using Linear Combinations
  of Unitary Operations},'' {\em Bulletin of the American Physical Society},
  no.~1, pp.~1--18, 2012.

\bibitem{floating_point}
N.~Wiebe and V.~Kliuchnikov, ``{Floating Point Representations in Quantum
  Circuit Synthesis},'' {\em arXiv preprint arXiv:1305.5528}, 2013.

\bibitem{Paetznick2013}
A.~Paetznick and K.~Svore, ``{Repeat-Until-Success: Non-deterministic
  decomposition of single-qubit unitaries},'' {\em arXiv preprint
  arXiv:1311.1074}, pp.~1--24, 2013.

\bibitem{nathans}
N.~Wiebe and N.~S. Babcock, ``{Improved Error-Scaling for Adiabatic Quantum
  State Transfer},'' {\em Arxiv preprint arXiv:1105.6268}, vol.~0, pp.~1--10,
  2011.

\bibitem{AMP+07}
M.~Anderlini, P.~J. Lee, B.~L. Brown, J.~Sebby-Strabley, W.~D. Phillips, and
  J.~Porto, ``Controlled exchange interaction between pairs of neutral atoms in
  an optical lattice,'' {\em Nature}, vol.~448, no.~7152, pp.~452--456, 2007.

\bibitem{SBR+08}
R.~Stock, N.~Babcock, M.~Raizen, and B.~Sanders, ``Entanglement of
  group-ii-like atoms with fast measurement for quantum information
  processing,'' {\em Physical Review A}, vol.~78, no.~2, p.~022301, 2008.

\bibitem{DCZ01}
L.-M. Duan, J.~Cirac, and P.~Zoller, ``Geometric manipulation of trapped ions
  for quantum computation,'' {\em Science}, vol.~292, no.~5522, pp.~1695--1697,
  2001.

\bibitem{WBHS11}
N.~Wiebe, D.~W. Berry, P.~H{\o}yer, and B.~C. Sanders, ``Simulating quantum
  dynamics on a quantum computer,'' {\em Journal of Physics A: Mathematical and
  Theoretical}, vol.~44, no.~44, p.~445308, 2011.

\bibitem{NC00}
M.~A. Nielsen and I.~L. Chuang, {\em Quantum computation and quantum
  information}.
\newblock Cambridge university press, 2010.

\bibitem{AVD+08}
D.~Aharonov, W.~Van~Dam, J.~Kempe, Z.~Landau, S.~Lloyd, and O.~Regev,
  ``Adiabatic quantum computation is equivalent to standard quantum
  computation,'' {\em SIAM review}, vol.~50, no.~4, pp.~755--787, 2008.

\bibitem{xu+14}
J.-S. Xu, M.-H. Yung, X.-Y. Xu, S.~Boixo, Z.-W. Zhou, C.-F. Li,
  A.~Aspuru-Guzik, and G.-C. Guo, ``Demon-like algorithmic quantum cooling and
  its realization with quantum optics,'' {\em Nature Photonics}, 2014.

\bibitem{BKS09}
S.~Boixo, E.~Knill, and R.~D. Somma, ``Eigenpath traversal by phase
  randomization.,'' {\em Quantum Information \& Computation}, vol.~9, no.~9,
  pp.~833--855, 2009.

\bibitem{Wiebe2011}
N.~Wiebe and D.~Berry, ``{Simulating quantum dynamics on a quantum computer},''
  {\em Journal of Physics A: \ldots}, pp.~1--21, 2011.

\bibitem{Berry2012}
D.~Berry, R.~Cleve, and S.~Gharibian, ``{Gate-efficient discrete simulations of
  continuous-time quantum query algorithms},'' {\em arXiv preprint
  arXiv:1211.4637}, pp.~1--28, 2012.

\bibitem{Berry2007}
D.~Berry, G.~Ahokas, and R.~Cleve, ``{Efficient quantum algorithms for
  simulating sparse Hamiltonians},'' {\em in Mathematical Physics}, no.~1,
  pp.~1--9, 2007.

\bibitem{CHW11}
D.~Cheung, P.~H{\o}yer, and N.~Wiebe, ``Improved error bounds for the adiabatic
  approximation,'' {\em Journal of Physics A: Mathematical and Theoretical},
  vol.~44, no.~41, p.~415302, 2011.

\end{thebibliography}

\end{document}